\documentclass[a4paper,10pt]{amsart}
\usepackage{amsmath}
\usepackage{amssymb}
\usepackage{amsthm}
\usepackage{amsfonts}
\usepackage{color}
\usepackage{graphicx}
\usepackage{graphicx,epstopdf}
\usepackage{mathtools}
\usepackage{mathrsfs}
\usepackage{multirow}
\usepackage{rotating}
\usepackage{enumerate}
\usepackage{enumitem}
\usepackage{epsfig}
\usepackage{setspace}
\usepackage{ragged2e}
\usepackage{blindtext}
\usepackage{lineno}
\usepackage{url}
\usepackage{caption}
\usepackage{subcaption}
\usepackage{comment}
\usepackage{physics}
\usepackage{lscape}
\usepackage{float}
\usepackage{array}
\newcolumntype{C}[1]{>{\centering\arraybackslash}m{#1}}
\usepackage[colorlinks=true,linkcolor=blue,citecolor=blue, urlcolor=blue]
{hyperref}%
\newtheorem{cnj}{Conjecture}
\newtheorem{lemma}{Lemma}
\newtheorem{thm}{Theorem}

\newtheorem*{question*}{Question}
\theoremstyle{definition}

\newtheorem{rem}{Remark}

\usepackage[
backend=biber,
style=numeric-comp,
sorting=none
]{biblatex}
\usepackage{geometry}
\title{An Investigation of Additional Food Models with Generalised Functional Response}
 \keywords{}
\author{Urvashi Verma}
\address{Department of Mathematics, Iowa State University, Ames, IA, USA}
\email{uverma@iastate.edu}

\author{Kanishka Goyal}
\address{Department of Mathematics, Iowa State University, Ames, IA, USA}
\email{kgoyal@iastate.edu}

\author{Saptarshi Biswas}
\address{Department of Computer Science, Iowa State University; Ames National Laboratory, Ames, IA, USA}
\email{sbiswas@iastate.edu; saptarshi.biswas9@gmail.com}

\author{James I. Lathrop}
\address{Department of Computer Science, Iowa State University, Ames, IA, USA}
\email{jil@iastate.edu}

\author{Rana D. Parshad}
\address{Department of Mathematics, Iowa State University, Ames, IA, USA }
\email{rparshad@iastate.edu}
\date{}

\begin{document}

\maketitle
\begin{abstract}
Additional food sources are often used to improve the effectiveness of predators in controlling pest populations.
However, the non-symmetric structure of additional food predator-prey models can cause certain aspects of their dynamics challenging to analyze. In this work, we study a general class of additional food models and establish conditions under which the coexistence equilibrium is globally stable. We then focus on a Holling type IV functional response with AF and show the existence of a Bogdanov-Takens bifurcation of codimension 3. We also study these models through the lens of deterministic chemical reaction network theory. Our analysis shows that the introduction of additional food increases the deficiency of the underlying reaction network and suggests a possible link between higher deficiency and complex bifurcations.
\end{abstract}
\section{Introduction}
Predator-prey dynamics are ubiquitous in ecological systems. They describe the interaction between a hunting organism, a predator, that hunts and depredates a food source, its prey. These models are typically of Gause type, and consist of ordinary differential equations describing the interactions of a prey density $x$, and predator density $y$, given by, $\frac{dx}{dt} = xg(x) - yf(x) , \  
\frac{dy}{dt} =  \epsilon yf(x)  - \delta y$.
In these models, the functional response $f(x)$, which describes the rate at which predators consume prey, and the numerical response, which describes how predator growth changes with prey density, are closely connected and essentially scale linearly with each other. Classical forms for the functional response $f(x)$ are the Holling Types I, II, and III \cite{Holling_1959, holling1959components,holling1965functional, DAWES201311}. These have therefore been widely used to study population dynamics. Predator-prey models find many applications in applied and theoretical ecology. The application we are concerned with in this manuscript is biological control. Herein a predator is introduced into an ecosystem to depredate, and so control a target pest, its prey \cite{V96, S99}. Many predators may also utilize other food resources than the target prey population. Resources such as nectar, pollen, and honeydew can enhance predator reproduction, longevity, and persistence \cite{wade2008conservation,lundgren2009nutritional,vandekerkhove2010pollen}, and when predators fail to suppress pests, their effectiveness can be improved by providing additional food resources \cite{SV06,T15}. The incorporation of additional food into predator–prey models has therefore attracted considerable attention in the literature \cite{van2001alternative,van2002plants, SP07,SP10,SP11,SPD17,SPV18,S02,sen2015global, verma2026additional}. However, additional food (AF) models introduce more complex interactions between prey consumption and predator growth. In such systems, predator growth may depend not only on prey but also on the quality and quantity of additional food. As a result, the forms of the functional and numerical response of the introduced predator in AF models are not symmetric. This asymmetry makes the global analysis of the coexistence equilibrium more challenging, and many of the standard Lyapunov techniques used for classical predator–prey systems are not directly applicable \cite{hsu2005survey}. While a rich literature exists on the global stability of predator-prey systems \cite{hsu2005survey,harrison1979global, hsu1995, korobeinikov2009stability,ding2020global, gomez2025implications}, relatively few studies have established comparable results for models with additional food \cite{SP07,verma2026additional}. In the current manuscript, we develop a general mathematical framework for analyzing the global stability of the coexistence equilibrium in a generalized class of AF models.

Beyond global stability, it is also important to understand the bifurcation structure of AF models since ecological systems often undergo qualitative changes when key parameters are varied \cite{kuznetsov1998elements, perko2013differential}. Existing studies on AF models have primarily focused on codimension-one bifurcations, namely, saddle-node, transcritical, and Hopf bifurcations to name some \cite{SP07, SPV18, ananth2021influence, ananth2021optimal}, as also summarized in Table~\ref{tab:sum_def}. While higher-codimension bifurcations have only recently been explored in AF mediated predator competition models with Type II functional response \cite{parshad2023additional, verma2026}. These higher-codimension bifurcations play a crucial role in understanding the complex dynamics which includes the existence of limit cycles, homoclinic loop, and multistability. In particular, Bogdanov-Takens (BT) bifurcation acts as an organizing center through which these richer dynamical behaviors arise. AF models with a Type IV functional response, have been investigated a fair amount in the recent literature \cite{ananth2021optimal, prakash2026role}. These responses are non-monotone, and so do not permit the dynamic of pest explosion \cite{parshad2023additional}. However, these models lack a detailed investigation concerning higher codimensional bifurcations. Motivated by this gap, we investigate higher-codimension BT bifurcation in AF model with Type IV functional response and establish the existence of codimension-$3$ degeneracies. Although bifurcation theory provides a lens to investigate rich(er) dynamics, in ODE systems, there is no general approach to ascertaining if and when a higher codimensional bifurcation will occur for a given ODE. The approach in the literature relies on making a series of transformations on the given ODE, so as to convert it into a standard normal form, from which the type of bifurcation is determined. This series of affine transformations often is cumbersome, and may not yield reduction to a standard normal form to begin with. Thus it may be useful in answering this question via alternative techniques. This could point to the existence of richer dynamics, without having to actually make the long drawn out estimates to attempt to prove them - particularly if they do not exist in the first place. To this end, we will describe a related field: Chemical Reaction Network Theory (CRNT) \cite{Shapiro_MassActionGibbsFunction_DCRNT_EquilibriumComputation, Aris1965_DCRNT, Krambeck1970_DCRNT}. 

CRNT studies how chemical species behave within a system of chemical reactions, known as a Chemical Reaction Network (CRN) \cite{Feinberg1972_complex_balalnce_DCRNT, HornJackson1972_complex_balancing}. One widely studied mathematical model in this context is the Deterministic Chemical Reaction Network (DCRN), whose behavior is governed by a system of Ordinary Differential Equations (ODEs) \cite{Bournez2021_survey_analog, EpsteinPojman_CRN_book, computability_dcrn}. Specifically, the ODEs represent the rate equations for the underlying kinetic system of the chemical reactions associated with each species. The rate equation for each species represents the rate of change in the species' concentration over time as the reaction progresses. Earlier works by Horn \cite{Horn1972_deficiency_zero}, Jackson \cite{HornJackson1972_complex_balancing}, and Feinberg \cite{Feinberg1995_deficiency_one, FEINBERG198059_deficiency_one, FEINBERG19872229_deficiency_one} introduced the concept of deficiency, represented as \(\delta\). They also presented significant findings known as the Deficiency Zero and Deficiency One theorems. These deficiency theorems offer a well-defined framework for understanding the stability of certain CRNs based on their deficiency values. Deficiency links the linear-algebraic and graph-theoretic concepts of the underlying reaction network to its dynamics. Intuitively, deficiency highlights the linear independence of the reaction vectors and reflects the network's dynamical complexity. However, for the deficiency theorems to be applicable we require the CRN in question to be of (i) low deficiency (ii) of weakly reversible type.  It is unclear what the link is between a CRN of higher deficiency, and the possible higher codimensional bifurcation structure it might possess. Predator-prey systems fall into exactly this class of CRNs. Thus in the current manuscript we will link the deficiency of the concerned predator-prey systems, when viewed through the lens of CRN technology, to the ambient higher codimensional bifurcations that may be present herein. This is accomplished via a series of computations, that then yield certain conjectures.

The section division of the manuscript is as follows. Section \ref{model_formulation} presents the model formulation, the assumptions on the functional and numerical responses, and the preliminary analysis of the model. In Section \ref{equilibrium_analysis}, we find the equilibrium points and derive conditions for
having one or more coexistence equilibria. We then perform a local stability analysis and provide conditions under which the coexistence equilibrium is globally stable. Section \ref{bifurcation_Section} proves the existence of BT bifurcation of codimension-$3$ for the AF model with Type IV functional response. \textcolor{black}{Section \ref{sec:crn}} introduces the concept of deficiency and computes the deficiency of several prey-predator systems. Finally, Section \ref{disc_conclusion} presents the discussion and conclusions.

\section{Model Formulation}
\label{model_formulation}
 The following \emph{general} model represents the dynamics of an introduced predator population $y(t)$ preying on a target pest population $x(t)$, while also provided with an additional food source of quantity $\xi$. Here, $\gamma, \beta$ and $\delta$ are positive constants representing the carrying capacity, the predators' conversion efficiency, and predators' natural mortality respectively. The quality of the additional food source is measured by $\frac{1}{\alpha}$, where $\alpha>0$.
\begin{equation}
\label{Eqn:1g}
\begin{aligned}
\frac{dx}{dt} &= x\left(1-\frac{x}{\gamma}\right) - f(x,\xi,\alpha) y \\    
\frac{dy}{dt} &=  \beta g(x, \xi,\alpha) y  - \delta y
\end{aligned}
\end{equation}

\begin{enumerate}[label=$C_{\arabic*}:$, ref=$C_{\arabic*}$]
    \item \label{C1} $f,g \in C^1(\mathbb{R}_+^2)$ and $f(x,\xi,\alpha)\ge0$, $g(x,\xi,\alpha)\ge0$, \ $\forall$ $x,\xi\ge0$
    \item \label{C2} $f(0,\xi, \alpha) = 0$, \  $\forall \xi\ge0$
  \item \label{C3} $f_x > 0, \ \forall\ x, \ x < \infty$
\item \label{C4}$
\exists\, M_1, M_2 > 0 \ \text{such that} \ 
f(x,\xi,\alpha) \le M_1, \quad g(x,\xi,\alpha) \le M_2, 
\ \forall\, x,\xi \ge 0
$

\item \label{C5} Let $g(x,\xi,\alpha) = f(x, \xi,\alpha) + g_1(x, \xi,\alpha)$,\\ 
where $f$ is the predators' functional response and $g_1$ is the contribution of additional food 

    \item \label{C6} $g_1(x,\xi,\alpha)\ge0$ 
    \item \label{C7} $g_1(x,0,\alpha) = 0 \quad \forall x \ge 0$
    \item \label{C8} $ \exists\, C(\xi, \alpha) > 0 \ \text{s.t.} \
g_1(x,\xi,\alpha) \le C(\xi, \alpha), \quad \forall x \ge 0$ and
$\lim_{\xi\to0} C(\xi,\alpha)=0, \forall \alpha>0$
    \item  \label{C9} $\frac{\partial g_1}{\partial \xi} \ge 0, \quad 
\forall\, x, \xi \ge 0$
\item $
\beta C(\xi,\alpha)<\delta
$
\label{C10}

\end{enumerate}
\subsection{Positivity \& Boundedness}

\begin{thm}
The model \eqref{Eqn:1g} is positively invariant in $\mathbb{R}_+^2$.
\label{pos_inv_gen}
\end{thm}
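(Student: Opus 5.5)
We will show that trajectories starting in $\mathbb{R}_+^2$ cannot reach the boundary axes in finite time from the interior, and that the axes themselves are invariant. The natural tool is the multiplicative (Kolmogorov-type) structure of the system. We use \ref{C2} together with the $C^1$ regularity in \ref{C1} to write $f(x,\xi,\alpha) = x\,h(x,\xi,\alpha)$, where
\[
h(x,\xi,\alpha)=\int_0^1 f_x(sx,\xi,\alpha)\,ds
\]
is continuous. The system \eqref{Eqn:1g} then becomes
\[
\frac{dx}{dt} = x\Big(1-\frac{x}{\gamma}-h(x,\xi,\alpha)\,y\Big),\qquad
\frac{dy}{dt} = y\big(\beta g(x,\xi,\alpha)-\delta\big).
\]
By \ref{C1} the right-hand side is $C^1$, so solutions exist and are unique on a maximal interval $[0,T)$.

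First, the axes are invariant. The set $\{x=0\}$ is invariant because $x\equiv 0$ solves the first equation with $y$ governed by $\dot y=y(\beta g(0,\xi,\alpha)-\delta)$. Likewise $\{y=0\}$ is invariant, with $x$ following the logistic equation. By uniqueness, no trajectory starting off an axis can reach that axis at a finite time.

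Second, we make this explicit by integrating:
\[
x(t)=x(0)\exp\!\Big(\int_0^t \Big(1-\frac{x(s)}{\gamma}-h(x(s),\xi,\alpha)\,y(s)\Big)ds\Big),
\qquad
y(t)=y(0)\exp\!\Big(\int_0^t \big(\beta g(x(s),\xi,\alpha)-\delta\big)ds\Big).
\]
These formulas hold on $[0,T)$ since the integrands are continuous along the solution. Hence $x(0)\ge0$ and $y(0)\ge0$ give $x(t)\ge0$ and $y(t)\ge0$, with strict positivity preserved, for all $t\in[0,T)$.

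The main obstacle is making sure $T=\infty$, so that invariance holds for all forward time; the sign of the functions $f,g$ alone does not rule out blow-up. We handle this with a priori bounds.
\begin{itemize}
\item Since $f\ge0$ by \ref{C1} and $y\ge0$, we have $\dot x\le x(1-x/\gamma)$. Comparison with the logistic equation gives $x(t)\le\max\{x(0),\gamma\}$.
\item By \ref{C4}, $\dot y\le(\beta M_2-\delta)y$, so $y$ grows at most exponentially.
\end{itemize}
Neither component can blow up in finite time, so the solution extends to $[0,\infty)$ and $\mathbb{R}_+^2$ is positively invariant.

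If one prefers to avoid the factorization $f=xh$, an alternative route is Nagumo's tangency condition on the boundary. On $x=0$ we have $\dot x=-f(0,\xi,\alpha)y=0$ by \ref{C2}, and on $y=0$ we have $\dot y=0$; in both cases the vector field does not point outward.
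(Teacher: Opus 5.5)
Your proof is correct, and your closing remark is in fact the paper's entire proof. The paper only observes that $\dot x|_{x=0}=-f(0,\xi,\alpha)y=0$ by \ref{C2} and $\dot y|_{y=0}=0$. It then invokes Theorem A.4 of Thieme, a quasi-positivity (Nagumo-type) invariance theorem for locally Lipschitz vector fields whose components are nonnegative on the corresponding coordinate faces.

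Your main route proves the invariance directly instead. The factorization $f=x\int_0^1 f_x(sx,\xi,\alpha)\,ds$ puts the system in Kolmogorov form. The integrating-factor formulas, or equivalently invariance of the axes plus uniqueness, then show that each coordinate keeps its sign, and stays strictly positive if it starts positive.

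What your approach buys:
\begin{itemize}
\item It is self-contained and needs no external invariance theorem.
\item It gives a slightly stronger conclusion: invariance of the open quadrant and of each semi-axis.
\item Using $f\ge 0$ and the bound $g\le M_2$ from \ref{C4}, it also gives global forward existence. The paper does not address this in the proof; it follows only later from the boundedness in Theorem~\ref{bdd_gen}.
\end{itemize}

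What it costs is small. The factorization uses nothing beyond \ref{C1}--\ref{C2}, the same hypotheses the paper uses. Only the global-existence step uses the extra hypothesis \ref{C4}. That step is not needed if positive invariance is read as staying in $\mathbb{R}_+^2$ on the maximal interval of existence.

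The paper's route buys brevity: one boundary check plus a citation. However, it leaves implicit the local Lipschitz hypothesis of the cited theorem, which follows from \ref{C1}.
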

\begin{proof}
From the differential equation of $x$ we have, $\dot{x}\rvert_{x=0}=0$ using \ref{C2}, and we also have $\dot{y}\rvert_{y=0}=0$. Therefore, by Theorem $A.4$ in \cite{thieme2018mathematics}, the model \eqref{Eqn:1g} is positive invariant in $\mathbb{R}_+^2$. 

    \label{proof_pos_inv_gen}
\end{proof}

\begin{thm}
Suppose conditions \ref{C1} -  \ref{C10} holds then, all solutions of model \eqref{Eqn:1g} originating in $\mathbb{R}_+^2$ are uniformly bounded within a region $\mathcal{B}$ where, 
$$
\mathcal{B}
=
\left\{
(x,y)\in\mathbb{R}_+^2:
0\le x\le \gamma,\,
0\le x+\frac{1}{\beta}y\le \frac{L}{a}
\right\}
$$ where $a=\delta-\beta C(\xi,\alpha)$.
    \label{bdd_gen}
\end{thm}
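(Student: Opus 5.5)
The argument has two parts: a comparison bound for $x$, then a Lyapunov-type bound for the weighted total population $W=x+\frac{1}{\beta}y$. By Theorem \ref{pos_inv_gen}, solutions starting in $\mathbb{R}_+^2$ stay there, so $x,y\ge 0$ and every nonnegativity condition may be used along trajectories.

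\emph{Bound on $x$.} Since $f\ge 0$ by \ref{C1} and $y\ge0$, the prey equation gives $\dot x\le x(1-x/\gamma)$. Comparison with the logistic equation yields
\[
\limsup_{t\to\infty}x(t)\le \gamma .
\]
If $x(0)\le\gamma$, then $x(t)\le\gamma$ for all $t\ge0$. If $x(0)>\gamma$, the solution enters $[0,\gamma+\varepsilon]$ in finite time for every $\varepsilon>0$, which is the sense of uniform ultimate boundedness intended here.

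\emph{Bound on $W$.} Differentiating and using \ref{C5},
\[
\dot W = x\Big(1-\frac{x}{\gamma}\Big) - f y + f y + g_1 y - \frac{\delta}{\beta}y
= x\Big(1-\frac{x}{\gamma}\Big) + g_1 y - \frac{\delta}{\beta} y .
\]
This is the key cancellation: the functional response $f$ drops out entirely, and only the additional-food term $g_1$ remains. By \ref{C6} and \ref{C8}, $g_1 y\le C(\xi,\alpha)\,y$. Hence
\[
\dot W \le x\Big(1-\frac{x}{\gamma}\Big) - \frac{\delta-\beta C}{\beta}\,y = x\Big(1-\frac{x}{\gamma}\Big) - \frac{a}{\beta}\,y ,
\]
with $a=\delta-\beta C(\xi,\alpha)>0$ by \ref{C10}. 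Adding $aW$ to both sides gives
\[
\dot W + aW \le x\Big(1-\frac{x}{\gamma}\Big)+a x .
\]
The right-hand side is bounded above by its maximum over $x\ge0$, namely $L:=\frac{\gamma(1+a)^2}{4}$, which is the constant $L$ in $\mathcal B$. Alternatively, one can take $L:=\gamma(1+a)$ by using $x\le\gamma$; either choice works.

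\emph{Conclusion.} Gronwall's inequality applied to $\dot W+aW\le L$ gives
\[
W(t)\le \frac{L}{a}+\Big(W(0)-\frac{L}{a}\Big)e^{-at},
\]
so $\limsup_{t\to\infty} W\le L/a$. Moreover, if $W(0)\le L/a$, then $W(t)\le L/a$ for all $t\ge0$, so $\mathcal B$ is positively invariant and attracts all solutions. Combined with the bound on $x$, this shows that solutions are uniformly bounded in $\mathcal{B}$.

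\emph{Main obstacle.} The only delicate step is the cancellation in $\dot W$. It works only because the numerical response splits as $f+g_1$ in \ref{C5}, with the same $f$ as in the prey equation, and the weight $1/\beta$ is chosen exactly to match the conversion efficiency $\beta$. The positivity of $a$ then rests entirely on \ref{C10}. A secondary point is that the statement leaves $L$ implicit, so I would fix it explicitly as above.
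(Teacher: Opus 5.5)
Your proof is correct and follows the paper's argument. Both use logistic comparison for $x$, then the weighted sum $W=x+\frac{1}{\beta}y$ in which $f$ cancels by \ref{C5}, the bound $g_1\le C(\xi,\alpha)$, and the linear inequality $\dot W+aW\le L$ integrated to give $\limsup W\le L/a$. The only difference is the constant: the paper takes $L=\frac{\gamma}{4}+a\gamma$ using $x\le\gamma$, whereas your $L=\frac{\gamma(1+a)^2}{4}$ holds for all $x\ge0$ and so also covers the transient when $x(0)>\gamma$, which the paper glosses over.
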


\begin{proof}
    \label{proof_bdd_gen}

    From the first equation of \eqref{Eqn:1g}, we see that 
$$
\frac{dx}{dt}
=
x\left(1-\frac{x}{\gamma}\right)
-f(x,\xi,\alpha)y
\le
x\left(1-\frac{x}{\gamma}\right)
$$
It follows that the growth of $x(t)$ is dominated by the logistic equation $\frac{du}{dt}=u\left(1-\frac{u}{\gamma}\right), $ whose positive solutions satisfy $\lim_{t\to\infty}u(t)=\gamma$. Hence, $\limsup_{t\to\infty}x(t)\le \gamma$. Now, let us define
$W(t)=x(t)+\frac{1}{\beta}y(t)
$. On differentiating $W(t)$, we get,
\begin{equation*}
\begin{aligned}
\frac{dW}{dt}
&=
\frac{dx}{dt}
+\frac{1}{\beta}\frac{dy}{dt}\\
&=
x\left(1-\frac{x}{\gamma}\right)
-f(x,\xi,\alpha)y
+g(x,\xi,\alpha)y
-\frac{\delta}{\beta}y
\end{aligned}
\end{equation*}
Using condition \ref{C5}, it follows that,
\begin{equation*}
\begin{aligned}
\frac{dW}{dt}
&=
x\left(1-\frac{x}{\gamma}\right)
+g_1(x,\xi,\alpha)y
-\frac{\delta}{\beta}y\\
\frac{dW}{dt}
+\delta W
&=
x\left(1-\frac{x}{\gamma}\right)
+\delta x
+g_1(x,\xi,\alpha)y
\end{aligned}
\end{equation*}
By condition \ref{C8},
$g_1(x,\xi,\alpha)\le C(\xi,\alpha),
$
and therefore
$$
\frac{dW}{dt}
+\delta W
\le
x\left(1-\frac{x}{\gamma}\right)
+\delta x
+C(\xi,\alpha)y
$$
Since
$
x\left(1-\frac{x}{\gamma}\right)
=
\frac{1}{\gamma}x(\gamma-x)$
and the maximum value of $x(\gamma-x)$ is $\gamma^2/4$, when
$0\le x\le \gamma$, it follows that
$
x\left(1-\frac{x}{\gamma}\right)
\le
\frac{\gamma}{4}.
$
Therefore,
$$
\frac{dW}{dt}
+\bigl(\delta-\beta C(\xi,\alpha)\bigr)W
\le
\frac{\gamma}{4}
+\bigl(\delta-\beta C(\xi,\alpha)\bigr)\gamma
$$
Let
$$
a=\delta-\beta C(\xi,\alpha)>0 \ \text{(from \ref{C10}), }\ \quad
L=\frac{\gamma}{4}+a\gamma
$$
Then,
$$
\frac{dW}{dt}+aW\le L
$$
Thus,
$$
0\le W(t)
\le
\frac{L}{a}
\left(1-e^{-at}\right)
+W(0)e^{-at}
$$
So, as $t \rightarrow \infty$ we have, $
0\le W(t)
\le
\frac{L}{a}
$.
This ensures that system \eqref{Eqn:1g} is dissipative with the asymptotic bound $\frac{L}{a}$. Since
$
W(t)=x(t)+\frac{1}{\beta}y(t)$
all solutions eventually enter and remain in the compact region

$$
\mathcal{B}
=
\left\{
(x,y)\in\mathbb{R}_+^2:
0\le x\le \gamma,\,
0\le x+\frac{1}{\beta}y\le \frac{L}{a}
\right\}
$$

Hence, all positive solutions of \eqref{Eqn:1g} are uniformly bounded in $\mathcal{B}$, a proper subset of $\mathbb R_+^2$.
\end{proof}

\section{Equilibrium Analysis}
\label{equilibrium_analysis}
Model \eqref{Eqn:1g} admits the following equilibrium points,

\begin{itemize}
    \item Trivial equilibrium $E_0$: $(0,0)$
 \item Predator-free equilibrium $E_\gamma$: $(\gamma, 0)$
 
\item Coexistence equilibrium point $E^*$: $(x^*,y^*)$

\end{itemize}
\begin{thm}
If we have 
$\beta\, g_1(0,\xi,\alpha) < \delta < \beta\big(f(\gamma,\xi,\alpha) + g_1(\gamma,\xi,\alpha)\big)
$ then, the model \eqref{Eqn:1g} admits at least one coexistence equilibrium $(x^*,y^*)$ with $$ y^* = \frac{x^*\left(1-\frac{x^*}{\gamma}\right)}{f(x^*, \xi,\alpha)} 
\ \text{and} \ g(x^*, \xi,\alpha) = \frac{\delta}{\beta}. $$
\end{thm}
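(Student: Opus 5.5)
The plan is to reduce the existence of a coexistence equilibrium to a one-dimensional root-finding problem in $x$ and then apply the intermediate value theorem. An interior equilibrium $(x^*,y^*)$ with $x^*,y^*>0$ must satisfy $\dot x=\dot y=0$. Since $y^*\neq 0$, the second equation of \eqref{Eqn:1g} gives $\beta g(x^*,\xi,\alpha)-\delta=0$, that is, $g(x^*,\xi,\alpha)=\delta/\beta$.

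For $x^*>0$, condition \ref{C2} together with \ref{C3} gives $f(x^*,\xi,\alpha)>f(0,\xi,\alpha)=0$. We may therefore divide in the first equation, obtaining $y^*=x^*(1-x^*/\gamma)/f(x^*,\xi,\alpha)$. This is exactly the formula in the statement. So the whole task is to produce a root $x^*$ of
\[
H(x)=\beta g(x,\xi,\alpha)-\delta
\]
that lies in the open interval $(0,\gamma)$. Such a root automatically makes $y^*>0$, because $x^*(1-x^*/\gamma)>0$ and $f(x^*,\xi,\alpha)>0$ there.

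For the root-finding step, \ref{C1} gives $g\in C^1$, so $H$ is continuous on $[0,\gamma]$. By \ref{C5} and \ref{C2}, $g(0,\xi,\alpha)=f(0,\xi,\alpha)+g_1(0,\xi,\alpha)=g_1(0,\xi,\alpha)$. Hence the left inequality of the hypothesis gives $H(0)=\beta g_1(0,\xi,\alpha)-\delta<0$. Again by \ref{C5}, $g(\gamma,\xi,\alpha)=f(\gamma,\xi,\alpha)+g_1(\gamma,\xi,\alpha)$, so the right inequality gives $H(\gamma)>0$. The intermediate value theorem then yields some $x^*\in(0,\gamma)$ with $H(x^*)=0$. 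The inequalities are strict, so $x^*\neq 0,\gamma$, and therefore $y^*>0$.

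There is no serious obstacle here. The one point needing care is checking strict positivity of both $f(x^*,\xi,\alpha)$ and $y^*$, which uses \ref{C2}, \ref{C3} and the strictness of the hypotheses. Uniqueness is not claimed: $g_1$ need not be monotone in $x$, so $H$ may have several zeros, which is why the statement says "at least one". I would also remark that the equilibrium lies in the absorbing region $\mathcal{B}$ of Theorem \ref{bdd_gen}, though this is not required for existence.
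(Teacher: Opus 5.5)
Your proposal is correct and takes the same approach as the paper. Both reduce the problem to finding a zero of $\beta(f+g_1)-\delta$ on $[0,\gamma]$, evaluate the endpoints via \ref{C2} and \ref{C5}, apply the intermediate value theorem, and then read off $y^*>0$. Your use of \ref{C2}--\ref{C3} to justify $f(x^*,\xi,\alpha)>0$ fills in a step the paper only asserts.
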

\begin{proof}
At a coexistence equilibrium $(x^*,y^*)$, we have $x^*>0$, $y^*>0$, and
$\frac{dy}{dt} = y(\beta g(x,\xi,\alpha) - \delta) = 0
$
Thus,
$y^*(\beta g(x^*,\xi,\alpha) - \delta) = 0$
Since $y^*>0$, it follows that
$\beta g(x^*,\xi,\alpha) = \delta$.

Using \ref{C5}, we write
$$f(x^*,\xi,\alpha)+g_1(x^*,\xi,\alpha) = g(x^*,\xi, \alpha) \implies f(x^*,\xi,\alpha)+g_1(x^*,\xi,\alpha) = \frac{\delta}{\beta} $$

 Let us define a function $\phi(x,\xi,\alpha) = \beta(f(x,\xi,\alpha)+g_1(x,\xi,\alpha))$. From \ref{C1}, $\phi$ is continuous on $[0,\gamma]$. We evaluate $\phi$ at the endpoints, At $x=0$, $\phi(0,\xi,\alpha) = \beta(f(0,\xi,\alpha)+g_1(0,\xi,\alpha))$
then from \ref{C2} we have, $\phi(0,\xi,\alpha) = \beta g_1(0,\xi,\alpha)$.  At $x=\gamma$,  $\phi(\gamma,\xi,\alpha) = \beta(f(\gamma,\xi,\alpha)+g_1(\gamma,\xi,\alpha))$.

$$ \text{If} \ 
\beta g_1(0,\xi,\alpha) < \delta < \beta(f(\gamma,\xi,\alpha)+g_1(\gamma,\xi,\alpha)) \ 
\text{then,} \ \phi(0,\xi,\alpha) < \delta < \phi(\gamma,\xi,\alpha)$$

Hence, by the Intermediate Value Theorem (IVT), there exists $x^* \in (0,\gamma)$ such that
$$
\phi(x^*,\xi, \alpha) = \delta \implies f(x^*,\xi,\alpha)+g_1(x^*,\xi,\alpha)= g(x^*,\xi,\alpha) = \dfrac{\delta}{\beta}
\ \text{and,} \ y^* = \dfrac{x^*\left(1-\frac{x^*}{\gamma}\right)}{f(x^*, \xi,\alpha)}$$
Since $x^* \in (0,\gamma)$ and $f(x^*,\xi,\alpha) > 0$ then $y^*>0$, thus proving the existence of coexistence equilibrium.

\end{proof}

\begin{rem}
    
If $g_x(x,\xi,\alpha) > 0$ for all $x \in (0,\gamma)$, then the equation
$g(x,\xi,\alpha)=\frac{\delta}{\beta}$
has at most one solution in $(0,\gamma) $. So, the system admits at most one coexistence equilibrium (see figure \ref{fig:unique}). 
If $g_x(x,\xi,\alpha)$ changes sign on $(0,\gamma)$, then multiple coexistence equilibria may exist (see figure \ref{fig:multiple}). 
\label{Remark:unique}
\end{rem}
\begin{figure}
  \begin{subfigure}{.48\textwidth}
\centering
  \includegraphics[width= 6.6cm, height=5cm]{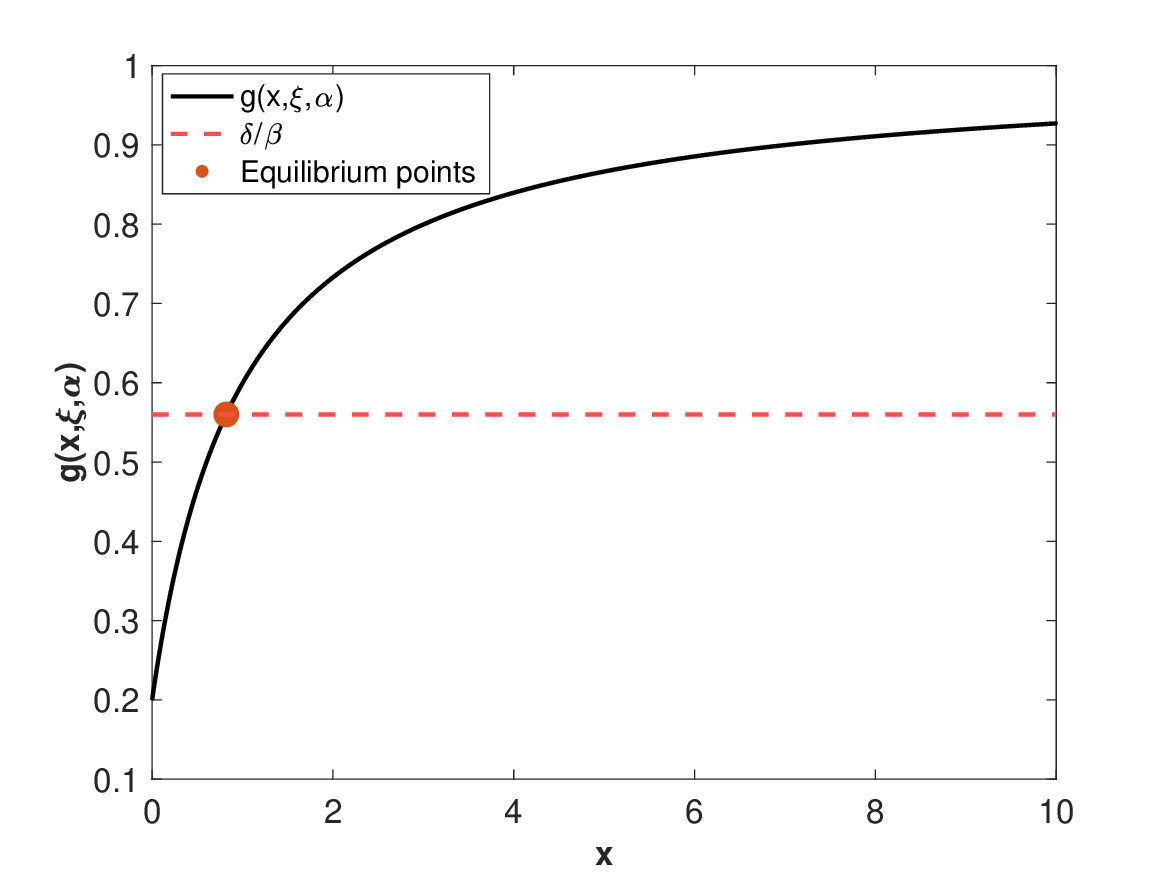}
 \subcaption{Increasing}
 \label{fig:unique}
  \end{subfigure}
  \begin{subfigure}{.48\textwidth}
  \centering
  \includegraphics[width= 6.6cm, height=5cm]{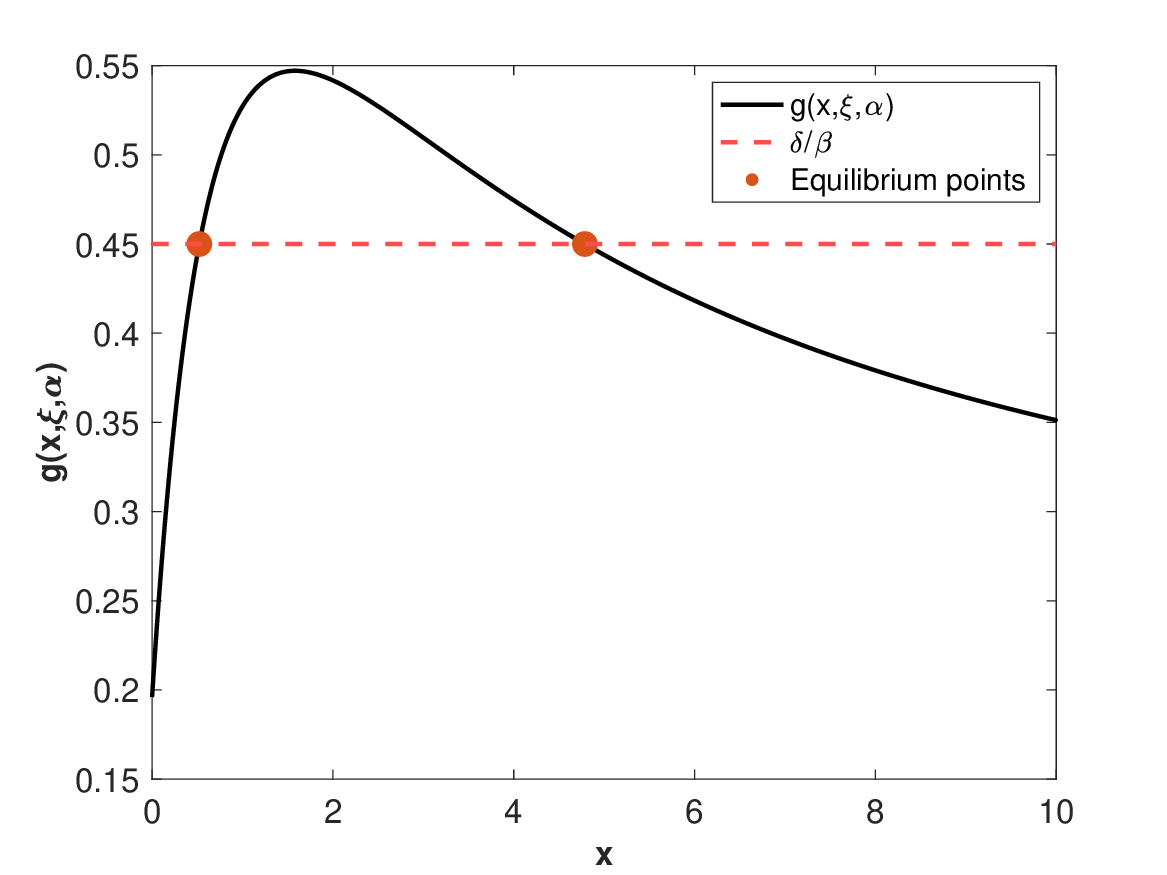}
 \subcaption{Non-monotonic }
\label{fig:multiple}
 \end{subfigure}
 \caption{If the numerical response $g(x, \xi, \alpha)$ is increasing on $0<x<\gamma$, then a unique coexistence equilibrium exist. If the form of the numerical response $g(x, \xi, \alpha)$ is non-monotonic on $0<x<\gamma$, then more than one coexistence equilibrium may exist. }
 \label{exis_equilibrium}
 \end{figure}
We now consider the local stability analysis of the biologically relevant equilibrium points for the system \eqref{Eqn:1g}. The Jacobian matrix $(J)$ for \eqref{Eqn:1g} is given by: 

\begin{equation*}
\centering
    J(x,y)=
\begin{bmatrix}
1 - \frac{2x}{\gamma} - y f_x(x,\xi,\alpha) & -f(x,\xi,\alpha) \\
\beta y g_x(x,\xi,\alpha) & \beta g(x,\xi,\alpha) - \delta
\end{bmatrix}
\end{equation*}

\medskip

\begin{thm}
The trivial equilibrium $E_0$ is unstable, and the equilibrium $E_\gamma$ is locally asymptotically stable if
$\beta g(\gamma,\xi,\alpha) < \delta,$
and is a saddle point if
$\beta g(\gamma,\xi,\alpha) > \delta.$
\label{stability_trivial_equilibrium}
\end{thm}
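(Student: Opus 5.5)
The plan is to evaluate the Jacobian $J$ above at each boundary equilibrium, read off the eigenvalues, and apply the standard linearization (Hartman--Grobman) criterion. At both $E_0$ and $E_\gamma$ we have $y=0$, so the $(2,1)$ entry $\beta y\, g_x$ vanishes. The Jacobian is therefore upper triangular, and its eigenvalues are just its diagonal entries. No trace/determinant computation is needed.

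\textbf{Step 1: $E_0$.} At $(0,0)$ the $(1,1)$ entry is $1-0-0\cdot f_x(0,\xi,\alpha)=1$, using that $f_x$ is finite by \ref{C1}. The $(1,2)$ entry is $-f(0,\xi,\alpha)=0$ by \ref{C2}. The $(2,2)$ entry is $\beta g(0,\xi,\alpha)-\delta=\beta g_1(0,\xi,\alpha)-\delta$, again by \ref{C2} and \ref{C5}. The eigenvalue $\lambda_1=1>0$ alone shows $E_0$ is unstable, whatever the sign of the second eigenvalue. Indeed, \ref{C8} and \ref{C10} give $\beta g_1(0,\xi,\alpha)\le \beta C(\xi,\alpha)<\delta$, so $\lambda_2<0$ and $E_0$ is in fact a saddle. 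I will mention this as a remark, since instability is all that is claimed.

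\textbf{Step 2: $E_\gamma$.} At $(\gamma,0)$ the $(1,1)$ entry is $1-\tfrac{2\gamma}{\gamma}-0=-1$. The off-diagonal entries are $-f(\gamma,\xi,\alpha)$ and $0$, and the $(2,2)$ entry is $\beta g(\gamma,\xi,\alpha)-\delta$. The eigenvalues are therefore $\lambda_1=-1<0$ and $\lambda_2=\beta g(\gamma,\xi,\alpha)-\delta$. If $\beta g(\gamma,\xi,\alpha)<\delta$, both eigenvalues are negative, so $E_\gamma$ is hyperbolic and locally asymptotically stable. If $\beta g(\gamma,\xi,\alpha)>\delta$, the eigenvalues have opposite signs, so $E_\gamma$ is a hyperbolic saddle. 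Its stable manifold is the $x$-axis, which is invariant by Theorem~\ref{pos_inv_gen}, and its unstable manifold enters the interior of $\mathbb{R}^2_+$.

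There is no real obstacle here. The only points needing care are: (i) the products $y f_x$ and $y g_x$ vanish at $y=0$, which requires the derivatives to be finite, as guaranteed by the $C^1$ assumption \ref{C1}; and (ii) the borderline case $\beta g(\gamma,\xi,\alpha)=\delta$ is non-hyperbolic and is excluded by the strict inequalities in the statement, so linearization suffices.
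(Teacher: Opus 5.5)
Your proof is correct and matches the paper's argument: evaluate the Jacobian at $E_0$ and $E_\gamma$, note that it is upper triangular because $y=0$, and read off the eigenvalues $1,\ \beta g(0,\xi,\alpha)-\delta$ and $-1,\ \beta g(\gamma,\xi,\alpha)-\delta$. Your extra remark that $E_0$ is actually a saddle is also valid, and slightly sharpens the paper's conclusion, since $\beta g(0,\xi,\alpha)=\beta g_1(0,\xi,\alpha)\le \beta C(\xi,\alpha)<\delta$ by $C_2$, $C_5$, $C_8$ and $C_{10}$.
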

\begin{proof}
\label{proof_stability_trivial_equilibrium}
At $E_0=(0,0)$
\begin{equation*}
\centering
J(E_0)=
\begin{bmatrix}
1 & -f(0,\xi,\alpha) \\
0 & \beta g(0,\xi,\alpha)-\delta
\end{bmatrix}
\end{equation*}
\medskip

The eigenvalues are $\lambda_1 = 1 > 0$ and $\lambda_2 = \beta g(0,\xi,\alpha)-\delta$, hence $E_0$ is unstable.

\medskip

At $E_\gamma=(\gamma,0)$
\begin{equation*}
J(E_1)=
\begin{bmatrix}
-1 & -f(\gamma,\xi,\alpha) \\
0 & \beta g(\gamma,\xi,\alpha)-\delta
\end{bmatrix}
\end{equation*}
\medskip

The eigenvalues are $\lambda_1 = -1$ and $\lambda_2 = \beta g(\gamma,\xi,\alpha)-\delta$ so, $E_1$ is locally asymptotically stable if
$
\beta g(\gamma,\xi,\alpha) < \delta,
$
and is a saddle point if
$
\beta g(\gamma,\xi,\alpha) > \delta.
$
\end{proof}
\medskip

\begin{thm}
Let $E^*=(x^*,y^*)$ be a coexistence equilibrium of \eqref{Eqn:1g}, where
$\beta g(x^*,\xi,\alpha)=\delta, 
\quad 
y^*=\frac{x^*\left(1-\frac{x^*}{\gamma}\right)}{f(x^*, \ \xi,\ \alpha)}
$ then $E^*$ is, 

\begin{enumerate}
\item[(i)] locally asymptotically stable
if $g_x(x^*,\xi,\alpha)>0$, and $\dfrac{\gamma}{2}<x^*<\gamma$.

\item[(ii)] locally asymptotically stable if $g_x(x^*,\xi,\alpha)>0$, and $\left( \frac{f(x^*,\ \xi,\ \alpha)}{x^*}\right)' \geq 0 $

\item[(iii)] is a saddle if 
 $g_x(x^*,\xi,\alpha)<0$.
\end{enumerate}
    \label{stability_coexistence_equilibrium}
\end{thm}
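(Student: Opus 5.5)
The plan is to evaluate the Jacobian $J$ at $E^*$ and read off stability from its trace and determinant. Since $\beta g(x^*,\xi,\alpha)=\delta$, the $(2,2)$ entry vanishes, so
\[
J(E^*)=\begin{bmatrix} 1-\frac{2x^*}{\gamma}-y^* f_x(x^*,\xi,\alpha) & -f(x^*,\xi,\alpha)\\ \beta y^* g_x(x^*,\xi,\alpha) & 0\end{bmatrix}.
\]
This gives $\det J(E^*)=\beta y^* f(x^*,\xi,\alpha)\, g_x(x^*,\xi,\alpha)$ and $\operatorname{tr} J(E^*)=1-\frac{2x^*}{\gamma}-y^*f_x(x^*,\xi,\alpha)$. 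For a planar system, $E^*$ is locally asymptotically stable when $\det>0$ and $\operatorname{tr}<0$, and it is a saddle when $\det<0$.

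Part (iii) follows at once. Because $x^*\in(0,\gamma)$, we have $y^*>0$, and \ref{C2}--\ref{C3} give $f(x^*,\xi,\alpha)>0$. So $g_x(x^*,\xi,\alpha)<0$ forces $\det J(E^*)<0$, the eigenvalues are real with opposite signs, and $E^*$ is a saddle. In parts (i) and (ii) the hypothesis $g_x>0$ gives $\det>0$ by the same argument, so in both cases it only remains to show $\operatorname{tr}<0$.

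For (i), note that $x^*>\gamma/2$ gives $1-2x^*/\gamma<0$. By \ref{C3}, $f_x>0$, and since $y^*>0$ we get $-y^*f_x<0$. Hence $\operatorname{tr}<0$.

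For (ii), which is the only step needing real work, I substitute the equilibrium relation $y^*=x^*(1-x^*/\gamma)/f(x^*)$ into the trace, writing $f^*=f(x^*,\xi,\alpha)$ and $f_x^*=f_x(x^*,\xi,\alpha)$:
\[
\operatorname{tr}=1-\frac{2x^*}{\gamma}-\frac{x^*(1-x^*/\gamma)f_x^*}{f^*}
=\Bigl(1-\frac{x^*}{\gamma}\Bigr)\Bigl(1-\frac{x^*f_x^*}{f^*}\Bigr)-\frac{x^*}{\gamma}.
\]
The hypothesis $\bigl(f(x,\xi,\alpha)/x\bigr)'\ge0$ at $x^*$ is equivalent to $x^*f_x^*-f^*\ge0$, that is, $x^*f_x^*/f^*\ge1$. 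So the first product is $\le0$, since $1-x^*/\gamma>0$. The remaining term $-x^*/\gamma$ is strictly negative, so $\operatorname{tr}<0$.

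Combining $\det>0$ with $\operatorname{tr}<0$ gives local asymptotic stability in both (i) and (ii). The main obstacle is only this algebraic regrouping of the trace in (ii); everything else is sign bookkeeping using \ref{C2}, \ref{C3}, and $x^*\in(0,\gamma)$.
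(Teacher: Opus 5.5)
Your proof is correct and takes essentially the same route as the paper. Both arguments use the Jacobian with vanishing $(2,2)$ entry and let the sign of $g_x(x^*,\xi,\alpha)$ control the determinant. In (ii), your regrouping $\bigl(1-\frac{x^*}{\gamma}\bigr)\bigl(1-\frac{x^*f_x^*}{f^*}\bigr)-\frac{x^*}{\gamma}$ is the same expression as the paper's $-x^*y^*\bigl(\frac{f(x^*,\xi,\alpha)}{x^*}\bigr)'-\frac{x^*}{\gamma}$ once you use $\frac{y^*f^*}{x^*}=1-\frac{x^*}{\gamma}$.
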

\begin{proof}
At $E^*=(x^*,y^*)$,
using $\beta g(x^*,\xi,\alpha)=\delta$, the Jacobian matrix  becomes,

$$
J(E^*) =
\begin{bmatrix}
1 - \frac{2x^*}{\gamma} - y^* f_x(x^*,\xi,\alpha) & -f(x^*,\xi,\alpha) \\
\beta y^* g_x(x^*,\xi,\alpha) & 0
\end{bmatrix}
$$

The characteristic polynomial is given by 
$
\lambda^2 - \text{Tr}(J(E^*))\,\lambda + \det(J(E^*)) = 0,
$
and after substituting $y^* = \frac{x^*\left(1-\frac{x^*}{\gamma}\right)}{f(x^*,\ \xi,\ \alpha)}$ we have, 
\begin{equation}
    \text{Tr}(J(E^*)) =
\left(1 - \frac{2x^*}{\gamma}\right)
- x^*\left(1-\frac{x^*}{\gamma}\right)\frac{f_x(x^*, \ \xi,\alpha)}{f(x^*,\xi,\alpha)}, \quad \det(J(E^*)) =
\beta x^*\left(1-\frac{x^*}{\gamma}\right) g_x(x^*,\xi,\alpha)
\label{tr_det_int}
\end{equation}

By the Routh--Hurwitz criterion, the equilibrium $E^*$ is locally asymptotically stable if and only if 
$$ \text{Tr}(J(E^*)) < 0
\ \text{and} \
\det(J(E^*)) > 0$$

Since $x^* \in (0,\gamma)$ and $\beta>0$, we have
$$
\det(J(E^*)) > 0 \iff g_x(x^*,\xi,\alpha) > 0
$$
Moreover, if $\frac{\gamma}{2} < x^* < \gamma$ and using \ref{C3} ensures $f_x(x^*,\xi,\alpha)>0$, we have $\text{Tr}(J(E^*)) < 0$. Now, rewriting the trace from equation \eqref{tr_det_int}, using $\frac{y^* f(x^*,\ \xi,\ \alpha)}{x^*} = 1-\frac{x^*}{\gamma}$ we have, 

$$ \text{Tr}(J(E^*)) = \frac{y^* f(x^*,\ \xi,\ \alpha)}{x^*}-
\frac{x^*}{\gamma}
- y^*  f_x(x^*, \ \xi,\alpha) = x^*y^* \left( \frac{f(x^*,\ \xi,\ \alpha)}{(x^*)^2} - \frac{f_x(x^*, \ \xi,\alpha)}{x^*}\right)-\frac{x^*}{\gamma} $$

$$ \text{Tr}(J(E^*)) = -x^*y^* \left( \frac{x^*f_x(x^*, \ \xi,\alpha) - f(x^*,\ \xi,\ \alpha)}{(x^*)^2} \right) -\frac{x^*}{\gamma} = -x^*y^* \left( \frac{f(x^*,\ \xi,\ \alpha)}{x^*}\right)'-\frac{x^*}{\gamma}  $$

Therefore if we have $\left( \frac{f(x^*,\ \xi,\ \alpha)}{x^*}\right)' \geq 0 $ then the trace will be negative and the equilibrium $E^*$ will be locally asymptotically stable. 
But if $g_x(x^*,\xi,\alpha) < 0$, then $\det(J(E^*)) < 0$, and the eigenvalues are real with opposite signs. Therefore, $E^*$ is a saddle point and hence unstable.

\label{proof_stability_coexistence_equilibrium}
\end{proof}

\subsection{Global Stability Analysis}

The equations representing the general model are given in \eqref{Eqn:1g}, but from \ref{C5} we have, $g(x,\xi,\alpha) = f(x,\xi,\alpha) + g_1(x,\xi,\alpha)$, then the predator equation can be written as, 
\begin{equation}
\frac{dy}{dt} = \beta f(x,\xi,\alpha) y + \beta g_1(x,\xi,\alpha) y - \delta y
\label{eq:predator_decomp}
\end{equation}

From \ref{C8}, there exists $C(\xi,\alpha) > 0$ such that, $g_1(x,\xi,\alpha) \le C(\xi,\alpha), \ \forall x \ge 0 $, then from \eqref{eq:predator_decomp}, we have 
\begin{equation}
 \frac{dy}{dt} \le \beta f(x,\xi,\alpha) y + \beta C(\xi,\alpha) y - \delta y 
\label{eq:predator_super_Comp}
 \end{equation}
Let $\tilde \delta = \delta - \beta C(\xi,\alpha) $, then the model \eqref{Eqn:1g} has a   super solution (in the predator component) given by \eqref{eq:super_system}:

\begin{equation}
\begin{aligned}
\frac{d\tilde{x}}{dt} &= \tilde{x}\left(1-\frac{\tilde{x}}{\gamma}\right) - f(\tilde{x},\xi,\alpha)\tilde{y} \\    
\frac{d\tilde{y}}{dt} &= \beta f(\tilde{x},\xi,\alpha)\tilde{y} - \tilde{\delta} \tilde{y}
\end{aligned}
\label{eq:super_system}
\end{equation}

From equation \eqref{eq:predator_super_Comp}, the comparison principle implies that $y(t)\le \tilde{y}(t), \forall t\ge 0$
where $\tilde{y}(t)$ denotes the solution of predator equation in system \eqref{eq:super_system}. We can state the following lemma about system \eqref{eq:super_system},

\begin{lemma}
If we have
$0 < \tilde{\delta} < \beta f(\gamma,\xi,\alpha)$
then the system \eqref{eq:super_system} admits at least one coexistence equilibrium 
$(\tilde{x}^*, \tilde{y}^*)$ with
$f(\tilde{x}^*,\xi,\alpha) = \frac{\tilde{\delta}}{\beta}, 
\quad \tilde{y}^* = \frac{\tilde{x}^*\left(1-\frac{\tilde{x}^*}{\gamma}\right)}{f(\tilde{x}^*,\ \xi, \ \alpha)}.$
\label{coexistence_existence_super_1}
\end{lemma}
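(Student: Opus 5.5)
The argument mirrors the existence proof for the coexistence equilibrium of \eqref{Eqn:1g}, with $g$ replaced by $f$ and $\delta$ by $\tilde\delta$. A coexistence equilibrium $(\tilde x^*,\tilde y^*)$ of \eqref{eq:super_system} with $\tilde x^*,\tilde y^*>0$ requires $\tilde y^*(\beta f(\tilde x^*,\xi,\alpha)-\tilde\delta)=0$. Since $\tilde y^*>0$, this forces $f(\tilde x^*,\xi,\alpha)=\tilde\delta/\beta$. Setting the prey equation to zero then gives $\tilde y^* = \tilde x^*(1-\tilde x^*/\gamma)/f(\tilde x^*,\xi,\alpha)$, provided $f(\tilde x^*,\xi,\alpha)\neq 0$. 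So it suffices to produce $\tilde x^*\in(0,\gamma)$ with $\beta f(\tilde x^*,\xi,\alpha)=\tilde\delta$.

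For this, define $\psi(x)=\beta f(x,\xi,\alpha)$ on $[0,\gamma]$; it is continuous by \ref{C1}. By \ref{C2}, $\psi(0)=0<\tilde\delta$, and by hypothesis $\tilde\delta<\beta f(\gamma,\xi,\alpha)=\psi(\gamma)$. The Intermediate Value Theorem then gives $\tilde x^*\in(0,\gamma)$ with $\psi(\tilde x^*)=\tilde\delta$. Because $f_x>0$ by \ref{C3}, the root is in fact unique, although the lemma only asks for existence.

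It remains to check positivity. We have $f(\tilde x^*,\xi,\alpha)=\tilde\delta/\beta>0$, since $\tilde\delta>0$ by hypothesis (equivalently, by \ref{C10}). Hence the formula for $\tilde y^*$ is well defined. Since $0<\tilde x^*<\gamma$, the numerator $\tilde x^*(1-\tilde x^*/\gamma)$ is positive, so $\tilde y^*>0$.

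None of the steps is a real obstacle. The only point needing care is that the left endpoint condition is automatic here, because $f(0,\xi,\alpha)=0$ and $\tilde\delta>0$. This is exactly why the hypothesis $0<\tilde\delta$ is stated, rather than a condition involving $g_1(0,\xi,\alpha)$ as in the earlier theorem.
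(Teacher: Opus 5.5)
Your proof is correct and follows the paper's own argument essentially step for step: you apply the Intermediate Value Theorem to $\beta f(\cdot,\xi,\alpha)$ on $[0,\gamma]$, using $f(0,\xi,\alpha)=0$ at the left endpoint, and then check that $\tilde y^*>0$. Two of your remarks go slightly beyond the paper and are also correct: your explicit justification that $f(\tilde x^*,\xi,\alpha)=\tilde\delta/\beta>0$, and your uniqueness observation via \ref{C3}, which the paper states separately in Remark~\ref{rem:uniqueness}.
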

\begin{proof}
    At a coexistence equilibrium $(\tilde{x}^*,\tilde{y}^*)$, we have $\tilde{x}^*>0$, $\tilde{y}^*>0$, and $\frac{d\tilde{y}}{dt} = \tilde{y}(\beta f(\tilde{x},\xi,\alpha) - \tilde{\delta}) = 0$.
Thus, $\tilde{y}^*(\beta f(\tilde{x}^*,\xi,\alpha) - \tilde{\delta}) = 0$.
Since $\tilde{y}^*>0$, it follows that
$\beta f(\tilde{x}^*,\xi,\alpha) = \tilde{\delta}$. Now, let us define $\phi(x,\xi,\alpha) = \beta f(x,\xi,\alpha)$. From \ref{C1}, $\phi$ is continuous on $[0,\gamma]$. Using \ref{C2}, we have
$
\phi(0,\xi,\alpha) = 0, \quad \phi(\gamma,\xi,\alpha) = \beta f(\gamma,\xi,\alpha)
$. If
$0 < \tilde{\delta} < \beta f(\gamma,\xi,\alpha),
\implies
\phi(0,\xi,\alpha) < \tilde{\delta} < \phi(\gamma,\xi,\alpha).
$ Hence by IVT, there exists $x^* \in (0,\gamma)$ such that
$$\phi(x^*,\xi,\alpha) = \tilde{\delta} \implies f(x^*,\xi,\alpha) = \frac{\tilde{\delta}}{\beta}.
$$ Thus, setting $\tilde{x}^* = x^*$, we obtain, 
$\tilde{y}^* = \frac{\tilde{x}^*\left(1-\frac{\tilde{x}^*}{\gamma}\right)}{f(\tilde{x}^*,\ \xi, \ \alpha)}.$ Since $\tilde{x}^* \in (0,\gamma)$ and $f(\tilde{x}^*,\xi, \alpha) > 0$, we have $\tilde{y}^*>0$. 
Thus, the system admits a coexistence equilibrium.
\label{proof_coexistence_existence_super_1}
\end{proof}

\begin{rem}
Since from \ref{C3} we have, $\frac{\partial f}{\partial x} > 0$ for all $x>0$, the equation 
$f(x,\xi, \alpha) = \frac{\tilde{\delta}}{\beta}$ admits a unique solution. 
Hence, the coexistence equilibrium is unique.
\label{rem:uniqueness}
\end{rem}

\begin{lemma}
\label{lem:p1s_1} Consider system \eqref{eq:super_system}. If 
$0 < \tilde{\delta} < \beta f(\gamma,\xi,\alpha)
$ and the function $H(x)=\frac{x(1-x/\gamma)}{f(x,\xi,\alpha)}$
is strictly decreasing on $(0,\gamma)$ then the coexistence equilibrium  \ $ (\tilde {x^*}, \tilde {y^*})$, is globally asymptotically stable.
\end{lemma}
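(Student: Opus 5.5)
We will prove global asymptotic stability of $(\tilde x^*,\tilde y^*)$ for \eqref{eq:super_system} using a Lyapunov function of Volterra type adapted to the Gause structure. This is the classical approach of Hsu for $\dot x = xh(x)-f(x)y$, $\dot y = y(\beta f(x)-\tilde\delta)$. Write $h(x)=1-x/\gamma$ and $H(x)=x h(x)/f(x)$, so that the prey equation reads $\dot{\tilde x}= f(\tilde x)\bigl(H(\tilde x)-\tilde y\bigr)$ and $\tilde y^*=H(\tilde x^*)$. Here $f(\tilde x^*)=\tilde\delta/\beta$ and $\tilde x^*\in(0,\gamma)$ come from Lemma \ref{coexistence_existence_super_1}, and uniqueness comes from Remark \ref{rem:uniqueness}. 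On the open quadrant we define
\begin{equation*}
V(\tilde x,\tilde y)=\int_{\tilde x^*}^{\tilde x}\frac{\beta f(s)-\tilde\delta}{f(s)}\,ds+\int_{\tilde y^*}^{\tilde y}\frac{\eta-\tilde y^*}{\eta}\,d\eta .
\end{equation*}
By \ref{C3}, $\beta f(s)-\tilde\delta$ has the sign of $s-\tilde x^*$. Hence $V\ge 0$, $V$ vanishes only at the equilibrium, and $V\to\infty$ as $\tilde y\to 0^+$ or $\tilde y\to\infty$. As $\tilde x\to 0^+$, the first integrand behaves like $-\tilde\delta/f(s)$, and $\int_0 ds/f(s)=\infty$ whenever $f$ is at most linear near $0$, which holds since $f\in C^1$ and $f(0)=0$.

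Next we compute $\dot V$ along solutions. The cross terms cancel:
\begin{equation*}
\dot V=\bigl(\beta f(\tilde x)-\tilde\delta\bigr)\bigl(H(\tilde x)-\tilde y\bigr)+\bigl(\tilde y-\tilde y^*\bigr)\bigl(\beta f(\tilde x)-\tilde\delta\bigr)=\bigl(\beta f(\tilde x)-\tilde\delta\bigr)\bigl(H(\tilde x)-H(\tilde x^*)\bigr).
\end{equation*}
For $\tilde x\in(0,\gamma)$, the first factor has the sign of $\tilde x-\tilde x^*$ by \ref{C3}. The second factor has the opposite sign because $H$ is strictly decreasing on $(0,\gamma)$. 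So $\dot V\le 0$ there, with equality only on the line $\tilde x=\tilde x^*$.

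For $\tilde x\ge\gamma$ we have $H(\tilde x)\le 0<H(\tilde x^*)$ and $\beta f(\tilde x)>\tilde\delta$, so $\dot V<0$ as well. Thus $\dot V\le 0$ on the whole open quadrant. Alternatively, one can observe that by the argument of Theorem \ref{bdd_gen} every trajectory enters $\{\tilde x\le\gamma\}$ asymptotically, and work there.

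We then apply LaSalle's invariance principle. Every trajectory with $\tilde x(0),\tilde y(0)>0$ is bounded: its positivity and upper bound follow as in Theorems \ref{pos_inv_gen} and \ref{bdd_gen} with $g_1\equiv 0$ and $\tilde\delta>0$. Because $V$ is nonincreasing and blows up at the axes, the trajectory stays in a compact subset of the open quadrant, so its $\omega$-limit set is nonempty, compact, invariant and contained in $\{\dot V=0\}=\{\tilde x=\tilde x^*\}$. On that line, invariance forces $\dot{\tilde x}=0$, i.e. $f(\tilde x^*)\bigl(H(\tilde x^*)-\tilde y\bigr)=0$, which gives $\tilde y=\tilde y^*$. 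Hence the largest invariant set is the equilibrium alone, which yields global attraction. Combined with Lyapunov stability from $V$, this gives global asymptotic stability in $\operatorname{int}\mathbb R_+^2$.

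The main obstacle is the boundary behavior, namely showing that trajectories cannot approach the $\tilde y$-axis. The properness of $V$ as $\tilde x\to0$ requires $\int_0 ds/f(s)=\infty$. This needs the $C^1$ (Lipschitz) property of $f$ at $0$, giving $f(s)\le Ks$. Should this fail, the fallback is to use the instability of $E_0$ and $E_\gamma$ (Theorem \ref{stability_trivial_equilibrium}, applied to the super system with $\tilde\delta<\beta f(\gamma)$, so $E_\gamma$ is a saddle) together with Butler–McGehee-type persistence to exclude boundary points from $\omega$-limit sets.
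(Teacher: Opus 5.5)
Your proposal is correct and follows the paper's route: your $V$ is exactly $\beta$ times the paper's Lyapunov function \eqref{lyapunov_construction_sup_gen}. You obtain the same factorization $\dot V=\bigl(\beta f(\tilde x)-\tilde\delta\bigr)\bigl(H(\tilde x)-\tilde y^*\bigr)$ and the same sign argument from \ref{C3} and the monotonicity of $H$. Your extra steps are all valid and fill in what the paper leaves implicit when it stops at $\dot V\le 0$: the sign of $\dot V$ for $\tilde x\ge\gamma$ (where the hypothesis on $H$ gives nothing), the blow-up of $V$ at the axes via $f(s)\le Ks$, and the LaSalle argument giving global attraction in the open quadrant.
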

\begin{proof}
We rewrite the system as
$$\dot{\tilde{x}} = \tilde{x}\left(1-\frac{\tilde{x}}{\gamma}\right) - f(\tilde{x},\xi,\alpha)\tilde{y}, 
\quad
\dot{\tilde{y}} = (\beta f(\tilde{x},\xi,\alpha) - \tilde{\delta})\tilde{y}
$$
Consider the following Lyapunov function, \cite{hsu2005survey}
\begin{equation}
V = \int_{\tilde{x}^*}^{\tilde{x}} \frac{f(\phi,\xi,\alpha) - f(\tilde{x}^*,\xi,\alpha)}{f(\phi,\xi,\alpha)}\, d\phi 
+ \frac{1}{\beta}\int_{\tilde{y}^*}^{\tilde{y}} \frac{\eta - \tilde{y}^*}{\eta}\, d\eta
 \label{lyapunov_construction_sup_gen}
\end{equation}

Differentiating $V$ with respect to time $t$ along the solutions of model \eqref{eq:super_system}, we get
\begin{equation*}
    \dot{V}
= \frac{f(\tilde{x},\xi,\alpha) - f(\tilde{x}^*,\xi,\alpha)}{f(\tilde{x},\xi,\alpha)}\, \dot{\tilde{x}}
+ \frac{1}{\beta} \frac{\tilde{y} - \tilde{y}^*}{\tilde{y}}\, \dot{\tilde{y}}
\end{equation*}

Using system of equations \eqref{eq:super_system} we have,

\begin{equation*}
\setlength{\jot}{10pt}
\begin{aligned}
\dot{V}
&= \frac{f(\tilde{x},\xi,\alpha) - f(\tilde{x}^*,\xi,\alpha)}{f(\tilde{x},\xi,\alpha)}
\left(\tilde{x}\left(1-\frac{\tilde{x}}{\gamma}\right) - f(\tilde{x},\xi,\alpha)\tilde{y}\right) + (\tilde{y} - \tilde{y}^*) \left(f(\tilde{x},\xi,\alpha) - f(\tilde{x}^*,\xi,\alpha)\right) \\
&= \left(f(\tilde{x},\xi,\alpha) - f(\tilde{x}^*,\xi,\alpha)\right)
\left( \frac{\tilde{x}\left(1-\frac{\tilde{x}}{\gamma}\right)}{f(\tilde{x},\xi,\alpha)} - \tilde{y}^* \right)
\end{aligned}
\end{equation*}

Since $f_x>0$ (see \ref{C3}), therefore $
f(\tilde{x},\xi,\alpha) - f(\tilde{x}^*,\xi,\alpha)$ is negative if $\tilde{x} < \tilde{x}^*$ and positive if $\tilde{x} > \tilde{x}^*$. Also by the monotonicity of $H(x)$, we have $\frac{\tilde{x}(1-\tilde{x}/\gamma)}{f(\tilde{x},\xi,\alpha)} - \tilde{y}^*$ is positive  if $\tilde{x} < \tilde{x}^*$ and negative if  $\tilde{x} > \tilde{x}^*$.

Therefore, we have that 

\begin{equation*}
     \dot{V}\rvert_{({\tilde{x_1} ,{\tilde{y_1})} = (\tilde{x^*_1},\tilde{y^*_1})}}  \leq 0 \ \text{in}  \ \mathbb{R}^2_{+}
\end{equation*}
Thus, the lemma is proved.
    \label{proof_lem:p1s_1}
\end{proof}

Now using the assumption \ref{C5}, and equation \eqref{eq:predator_decomp}, 
since $g_1(x,\xi,\alpha) \ge 0$ for all $x,\xi \ge 0$, we obtain the lower bound
\begin{equation}
\frac{dy}{dt} \ge \beta f(x,\xi,\alpha) y - \delta y
\label{sub_comp_predator}
\end{equation}
Let $\Bar{\delta} = \delta$, then the model \eqref{Eqn:1g} has the following sub solution (in the predator component) given by \eqref{eq:sub_system}:

\begin{equation}
\begin{aligned}
\frac{d\Bar{x}}{dt} &= \Bar{x}\left(1-\frac{\Bar{x}}{\gamma}\right) - f(\Bar{x},\xi,\alpha)\Bar{y} \\    
\frac{d\Bar{y}}{dt} &= \beta f(\Bar{x},\xi,\alpha)\Bar{y} - \Bar{\delta} \Bar{y}
\end{aligned}
\label{eq:sub_system}
\end{equation}

From equation \eqref{sub_comp_predator}, the comparison principle implies that $\bar{y}(t) \le y(t), \forall t\ge 0$
where $\bar{y}(t)$ denotes the solution of predator equation in system \eqref{eq:sub_system}. We can state the following lemma about system \eqref{eq:sub_system},
 \begin{lemma}
If
$0 < \Bar{\delta} < \beta f(\gamma,\xi,\alpha)$. 
Then the system \eqref{eq:sub_system} admits at least one coexistence equilibrium 
$(\Bar{x}^*, \Bar{y}^*)$ with
$
f(\Bar{x}^*,\xi,\alpha) = \frac{\Bar{\delta}}{\beta}, 
\quad 
\Bar{y}^* = \frac{\Bar{x}^*\left(1-\frac{\Bar{x}^*}{\gamma}\right)}{f(\Bar{x}^*,\ \xi, \ \alpha)}.
$
  \label{coexistence_existence_sub_1}
\end{lemma}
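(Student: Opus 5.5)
The argument mirrors the proof of Lemma \ref{coexistence_existence_super_1}, with $\tilde{\delta}$ replaced by $\bar{\delta}=\delta$.

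\textbf{Equilibrium condition.} Any coexistence equilibrium $(\bar{x}^*,\bar{y}^*)$ of \eqref{eq:sub_system} has $\bar{x}^*>0$ and $\bar{y}^*>0$. The predator equation is $\dot{\bar{y}}=\bar{y}\bigl(\beta f(\bar{x},\xi,\alpha)-\bar{\delta}\bigr)$, and we may divide by $\bar{y}^*\neq 0$. This forces $\beta f(\bar{x}^*,\xi,\alpha)=\bar{\delta}$. Setting the prey equation to zero and dividing by $f(\bar{x}^*,\xi,\alpha)$ gives the stated formula for $\bar{y}^*$. Division is legitimate because \ref{C2} and \ref{C3} give $f(x,\xi,\alpha)>0$ for $x>0$.

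\textbf{Existence via the intermediate value theorem.} Define $\phi(x)=\beta f(x,\xi,\alpha)$ on $[0,\gamma]$; it is continuous by \ref{C1}. By \ref{C2}, $\phi(0)=0$, and $\phi(\gamma)=\beta f(\gamma,\xi,\alpha)$. The hypothesis $0<\bar{\delta}<\beta f(\gamma,\xi,\alpha)$ therefore reads $\phi(0)<\bar{\delta}<\phi(\gamma)$. The intermediate value theorem then gives some $\bar{x}^*\in(0,\gamma)$ with $f(\bar{x}^*,\xi,\alpha)=\bar{\delta}/\beta$.

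\textbf{Positivity of $\bar{y}^*$.} Since $\bar{x}^*\in(0,\gamma)$, the numerator $\bar{x}^*(1-\bar{x}^*/\gamma)$ is positive. The denominator $f(\bar{x}^*,\xi,\alpha)$ is positive by the remark above. Hence $\bar{y}^*>0$, and $(\bar{x}^*,\bar{y}^*)$ is a genuine coexistence equilibrium.

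As in Remark \ref{rem:uniqueness}, the strict monotonicity in \ref{C3} also makes $\bar{x}^*$ unique, though uniqueness is not required by the statement. There is no real obstacle here. The only point needing care is that $f>0$ on $(0,\gamma]$, which follows from integrating \ref{C3} starting from $f(0,\xi,\alpha)=0$.
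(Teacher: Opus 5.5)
Your proposal is correct and follows the paper's proof: you apply the intermediate value theorem to $\phi(x)=\beta f(x,\xi,\alpha)$ on $[0,\gamma]$, then read off $\bar{y}^*$ from the prey nullcline and check that it is positive. Your explicit justification that $f>0$ on $(0,\gamma]$, obtained from \ref{C2} and \ref{C3}, supplies a step the paper only asserts.
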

\begin{proof}
The proof is identical to that of Lemma~\ref{coexistence_existence_super_1}. Particularly, if we define $\phi(x,\xi,\alpha)=\beta f(x,\xi,\alpha)$ and apply IVT
on $[0,\gamma]$ with $\Bar{\delta}$ in place of $\tilde{\delta}$ to obtain 
$\Bar{x}^* \in (0,\gamma)$ such that $f(\Bar{x}^*,\xi,\alpha)=\Bar{\delta}/\beta$. 
The expression for $\Bar{y}^*$ then follows similarly and since $\Bar{x}^* \in (0,\gamma)$ and $f(\Bar{x}^*,\xi,\alpha) > 0$, it follows that $\Bar{y}^*>0$. Thus, the system admits a coexistence equilibrium, and the uniqueness of the coexistence equilibrium follows as mentioned in remark \ref{rem:uniqueness}.
\label{proof_coexistence_existence_sub_1}
\end{proof}
\begin{lemma}
\label{lem:p2s_1}
Consider system \eqref{eq:sub_system}. If
$0 < \Bar{\delta} < \beta f(\gamma,\xi, \alpha)
$ and the function $H(x)=\frac{x(1-x/\gamma)}{f(x,\xi, \alpha)}$
is strictly decreasing on $(0,\gamma)$ then, the coexistence equilibrium $(\Bar{x}^*, \Bar{y}^*)$, is globally asymptotically stable.
\end{lemma}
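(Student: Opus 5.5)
The plan is to repeat the argument of Lemma~\ref{lem:p1s_1} verbatim, with $\bar\delta=\delta$ in place of $\tilde\delta$. System \eqref{eq:sub_system} has exactly the same form as \eqref{eq:super_system}. Under the hypothesis $0<\bar\delta<\beta f(\gamma,\xi,\alpha)$, Lemma~\ref{coexistence_existence_sub_1} together with Remark~\ref{rem:uniqueness} gives a unique interior equilibrium $(\bar x^*,\bar y^*)$ with $\bar x^*\in(0,\gamma)$, $f(\bar x^*,\xi,\alpha)=\bar\delta/\beta$ and $\bar y^*=H(\bar x^*)$.

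I would take the Gause-type Lyapunov function
\begin{equation*}
V(\bar x,\bar y)=\int_{\bar x^*}^{\bar x}\frac{f(\phi,\xi,\alpha)-f(\bar x^*,\xi,\alpha)}{f(\phi,\xi,\alpha)}\,d\phi+\frac{1}{\beta}\int_{\bar y^*}^{\bar y}\frac{\eta-\bar y^*}{\eta}\,d\eta .
\end{equation*}
It is defined on the open quadrant $(0,\infty)^2$. By \ref{C3} both integrands vanish only at the equilibrium and change sign there, so $V>0$ away from $(\bar x^*,\bar y^*)$ and $V$ has its minimum value $0$ there.

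Differentiating along trajectories and using $\beta f(\bar x^*,\xi,\alpha)=\bar\delta$ makes the cross terms $\pm (f(\bar x)-f(\bar x^*))\bar y$ cancel. This leaves
\begin{equation*}
\dot V=\bigl(f(\bar x,\xi,\alpha)-f(\bar x^*,\xi,\alpha)\bigr)\bigl(H(\bar x)-\bar y^*\bigr).
\end{equation*}
The first factor has the sign of $\bar x-\bar x^*$ by \ref{C3}. The second factor has the opposite sign on $(0,\gamma)$ because $H$ is strictly decreasing. For $\bar x\ge\gamma$ we have $H(\bar x)\le 0<\bar y^*$ while $f(\bar x)>f(\bar x^*)$, so $\dot V\le 0$ on the whole open quadrant. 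Moreover $\dot V=0$ exactly on the line $\bar x=\bar x^*$.

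The step needing most care, and glossed over in Lemma~\ref{lem:p1s_1}, is upgrading $\dot V\le0$ to global asymptotic stability. I would use LaSalle's invariance principle. Solutions starting in the open quadrant stay there (Theorem~\ref{pos_inv_gen} and uniqueness, since the axes are invariant). They are bounded: the bound $\limsup \bar x\le\gamma$ holds as in Theorem~\ref{bdd_gen}, and $\bar x+\bar y/\beta$ satisfies a dissipative inequality with rate $\bar\delta>0$. Hence $\omega$-limit sets are nonempty and compact. They also cannot touch the boundary: $V\to\infty$ as $\bar y\to0^+$, and also as $\bar x\to0^+$ provided $\int_0 d\phi/f(\phi)=\infty$, which holds since $f\in C^1$ with $f(0)=0$ gives $f(\phi)\le K\phi$. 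Boundedness of $V$ along the orbit then confines it to a compact subset of the interior.

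It remains to find the largest invariant set inside $\{\bar x=\bar x^*\}$. There $\dot{\bar x}=0$ forces $\bar x^*(1-\bar x^*/\gamma)=f(\bar x^*)\bar y$, so $\bar y=\bar y^*$. That set is therefore just the equilibrium, and every interior trajectory converges to it. Combined with the Lyapunov stability given by $V$, this yields global asymptotic stability.
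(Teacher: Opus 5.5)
Your proposal is correct and follows the same route as the paper: the Gause-type Lyapunov function of Lemma~\ref{lem:p1s_1}, together with the identity $\dot V=\bigl(f(\bar x,\xi,\alpha)-f(\bar x^*,\xi,\alpha)\bigr)\bigl(H(\bar x)-\bar y^*\bigr)\le 0$. Your extra steps — the region $\bar x\ge\gamma$, the blow-up of $V$ at the axes, and the LaSalle argument showing the largest invariant set in $\{\bar x=\bar x^*\}$ is the equilibrium — supply what the paper leaves implicit when it passes directly from $\dot V\le 0$ to global asymptotic stability.
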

\begin{proof}
The proof follows along the same lines as shown in the proof of Lemma \ref{lem:p1s_1}. We consider the Lyapunov function (see equation \ref{lyapunov_construction_sup_gen}) and, under the assumption that $H(x)$ is strictly decreasing on $(0,\gamma)$ we will have $\dot{V} = \left(f(\Bar{x},\xi,\alpha)-f(\Bar{x}^*,\xi, \alpha)\right)
\left( \frac{\Bar{x}\left(1-\frac{\Bar{x}}{\gamma}\right)}{f(\Bar{x},\xi, \alpha)} - \Bar{y}^* \right) \le 0$, proving the coexistence equilibrium  $(\Bar{x}^*,\Bar{y}^*)$ is globally asymptotically stable.
  \label{proof_lem:p2s_1}  
\end{proof}
We next state the following auxiliary theorem from \cite{perko2013differential},

\begin{thm}
\label{thm:dp1} (Dependence on parameters)

Let $E$ be an open subset of $R^{n+m}$ containing the point $(\bf{x_0}, \bf{\mu_0})$ where $(\bf{x_0}) \in R^m$ and assume that $\bf{f} \in C^1(E)$. It then follows that there exists $a>0$, $\delta> 0 $ such that for all $\bf{y} \in N_{\delta}(x_0)$ and $\bf{\mu} \in N_{\delta}(\mu_0)$, the initial value problem,

\begin{equation*}
    \dot{\bf{x}} = \bf{f}(\bf{x}, \bf{\mu}), \qquad
    \bf{x}(0) = \bf{y}
\end{equation*}

has a unique solution $\bf{u}(t,\bf{y},\bf{\mu})$ with $\bf{u}$ $\in C^1(G)$ where $G = [-a,a] \times N_{\delta}(x_0) \times N_{\delta}(\mu_0)$.
\end{thm}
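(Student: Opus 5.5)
This is the classical theorem on $C^1$ dependence of solutions on initial data and parameters, quoted from \cite{perko2013differential}. I would reduce it to the standard theorem on dependence on initial conditions by suspending the parameters as extra state variables.

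Define on $E\subset\mathbb{R}^{n+m}$ the augmented field
\[
\mathbf{F}(\mathbf{x},\boldsymbol{\mu})=(\mathbf{f}(\mathbf{x},\boldsymbol{\mu}),\mathbf{0}),
\]
and consider $\dot{\mathbf{x}}=\mathbf{f}(\mathbf{x},\boldsymbol{\mu})$, $\dot{\boldsymbol{\mu}}=\mathbf{0}$ with data $(\mathbf{x}(0),\boldsymbol{\mu}(0))=(\mathbf{y},\boldsymbol{\mu})$. Since $\mathbf{f}\in C^1(E)$, also $\mathbf{F}\in C^1(E)$, so $\mathbf{F}$ is locally Lipschitz near $(\mathbf{x}_0,\boldsymbol{\mu}_0)$. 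The argument then runs in three steps.

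First, I would apply Picard--Lindelöf on a closed ball $\overline{N}_{2\delta_0}(\mathbf{x}_0,\boldsymbol{\mu}_0)\subset E$. Let $M$ bound $|\mathbf{F}|$ and $K$ be a Lipschitz constant for $\mathbf{F}$ there. For $a<\min(\delta_0/M,1/K)$, the successive approximations converge uniformly on $[-a,a]$, uniformly in every initial point within $\delta_0$ of the base point. This gives a unique solution $\mathbf{u}(t,\mathbf{y},\boldsymbol{\mu})$, and the $\boldsymbol{\mu}$-component stays constant. Continuity of $\mathbf{u}$ in all variables follows from Gronwall: $|\mathbf{u}(t,\mathbf{y},\boldsymbol{\mu})-\mathbf{u}(t,\mathbf{y}',\boldsymbol{\mu}')|\le e^{Ka}(|\mathbf{y}-\mathbf{y}'|+|\boldsymbol{\mu}-\boldsymbol{\mu}'|)$.

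Second, and this is the main obstacle, I would prove differentiability with respect to $(\mathbf{y},\boldsymbol{\mu})$. Write $\Phi(t)$ for the solution of the first variational equation
\[
\dot{\Phi}=D\mathbf{F}(\mathbf{u}(t,\mathbf{y},\boldsymbol{\mu}))\,\Phi,\qquad \Phi(0)=I .
\]
I would show that $\mathbf{u}(t,\mathbf{y}+\mathbf{h},\boldsymbol{\mu}+\mathbf{k})-\mathbf{u}(t,\mathbf{y},\boldsymbol{\mu})-\Phi(t)(\mathbf{h},\mathbf{k})=o(|\mathbf{h}|+|\mathbf{k}|)$ uniformly in $t\in[-a,a]$. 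This comes from the integral equation together with the mean value theorem, using uniform continuity of $D\mathbf{F}$ on the compact ball, the Lipschitz bound from the first step, and Gronwall. The partial derivatives of $\mathbf{u}$ are then the entries of $\Phi$. These are continuous in $(t,\mathbf{y},\boldsymbol{\mu})$ because $\Phi$ solves a linear system whose coefficients depend continuously on these variables.

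Third, differentiability in $t$ is immediate from $\partial_t\mathbf{u}=\mathbf{F}(\mathbf{u})$, which is continuous. Shrinking $\delta\le\delta_0$ so that $N_\delta(\mathbf{x}_0)\times N_\delta(\boldsymbol{\mu}_0)$ lies in the ball gives $\mathbf{u}\in C^1(G)$ with $G=[-a,a]\times N_\delta(\mathbf{x}_0)\times N_\delta(\boldsymbol{\mu}_0)$. Projecting onto the $\mathbf{x}$-component yields the statement.

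In the paper itself this result is quoted as an auxiliary result from \cite{perko2013differential}. It is presumably used to conclude that, as $\xi\to0$ (so $C(\xi,\alpha)\to0$ by \ref{C8}), trajectories of the super-system \eqref{eq:super_system} and the sub-system \eqref{eq:sub_system} approach each other.
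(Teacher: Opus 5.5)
Your proposal is correct and is essentially the argument behind the statement: the paper gives no proof of its own and simply quotes Perko, whose proof is your reduction (adjoin $\dot{\boldsymbol{\mu}}=\mathbf{0}$ and invoke $C^1$ dependence on initial conditions, itself established through successive approximations, the first variational equation and Gronwall's inequality). One caution about your closing remark: the conclusion only controls solutions uniformly on the finite interval $[-a,a]$, so it cannot by itself justify the $t\to\infty$ comparison the paper draws from it in Theorem~\ref{thm:gs1_general}.
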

\begin{thm}
\label{thm:gs1_general} 
Consider system \eqref{Eqn:1g}. If $H(x)=\frac{x(1-x/\gamma)}{f(x,\xi, \alpha)}$
is strictly decreasing on $(0,\gamma)$. Then, under the parametric restrictions, $
0 < \tilde{\delta} < \beta f(\gamma,\xi, \alpha)
$ we have the solution $ (x,y)$ to \eqref{Eqn:1g} is globally asymptotically stable.
\end{thm}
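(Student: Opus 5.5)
The approach is to sandwich the predator component of \eqref{Eqn:1g} between the predator components of the super-system \eqref{eq:super_system} and the sub-system \eqref{eq:sub_system}. Lemmas \ref{lem:p1s_1} and \ref{lem:p2s_1} give global asymptotic stability of $(\tilde{x}^*,\tilde{y}^*)$ and $(\bar{x}^*,\bar{y}^*)$ respectively. Then I would let $C(\xi,\alpha)\to 0$, using \ref{C8}, so that the two comparison systems merge, and use Theorem \ref{thm:dp1} to transfer convergence to the true coexistence equilibrium $E^*$.

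\textbf{Step 1 (setup and hypotheses).} The hypothesis $0<\tilde{\delta}<\beta f(\gamma,\xi,\alpha)$ implies $0<\bar{\delta}=\delta<\beta f(\gamma,\xi,\alpha)+\beta C(\xi,\alpha)$, so it does not directly give the sub-system hypothesis. I would either note that for $C$ small the sub-system hypothesis holds, or add it explicitly. Lemmas \ref{coexistence_existence_super_1} and \ref{coexistence_existence_sub_1} together with Remark \ref{rem:uniqueness} then give unique equilibria with $f(\tilde{x}^*)=\tilde{\delta}/\beta$ and $f(\bar{x}^*)=\delta/\beta$. Since $f_x>0$, we get $\tilde{x}^*\le\bar{x}^*$, and both converge to the same point as $C\to0$.

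\textbf{Step 2 (comparison).} The comparison principle, via \eqref{eq:predator_super_Comp} and \eqref{sub_comp_predator}, gives $\bar{y}(t)\le y(t)\le\tilde{y}(t)$ for trajectories started at the same initial data. Taking $\limsup$ and $\liminf$ and using Lemmas \ref{lem:p1s_1} and \ref{lem:p2s_1}, I obtain $\bar{y}^*\le\liminf y\le\limsup y\le\tilde{y}^*$. The $x$-equation is then nonautonomous logistic-type, with $y$ asymptotically trapped in a band. I would use monotonicity of $\dot x$ in $y$ (it is decreasing, since $f\ge0$) to compare $x$ with the solutions of $\dot u=u(1-u/\gamma)-f(u)\,y_\pm$. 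This squeezes $\limsup x$ and $\liminf x$ between the $x$-roots of $H(u)=\tilde{y}^*$ and $H(u)=\bar{y}^*$; strict monotonicity of $H$ makes these roots unique.

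\textbf{Step 3 (closing the gap).} This is the main obstacle. The comparison is only one-sided in the coupled system, since the comparison systems use the same $x$-dynamics but $x$ depends on $y$, so the sandwich in Step 2 is not rigorous without a monotone-systems argument. Moreover, the band $[\bar{y}^*,\tilde{y}^*]$ shrinks to a point only as $C(\xi,\alpha)\to0$. I would first try to invoke Theorem \ref{thm:dp1}: the vector fields of \eqref{Eqn:1g}, \eqref{eq:super_system} and \eqref{eq:sub_system} are $C^1$ in the parameter $C$ and coincide at $C=0$, so the solutions and equilibria depend continuously on $C$. Together with $\lim_{\xi\to0}C=0$, this forces $(x,y)\to E^*$ in the regime where $\xi$ is small. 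If a uniform-in-$\xi$ statement is needed, I would fall back on building a Lyapunov function for \eqref{Eqn:1g} directly, modelled on \eqref{lyapunov_construction_sup_gen} with $g$ in place of $f$ in the $x$-integral, and checking its sign. I expect that sign check to require extra structure linking $g_1$ and $H$.
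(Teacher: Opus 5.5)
Your Steps 2--3 are essentially the paper's own proof. The paper also sandwiches $y$ between the predator components of \eqref{eq:super_system} and \eqref{eq:sub_system}, then takes $(\xi_n,\alpha_n)$ with $\alpha_n\xi_n=\alpha\xi$ (so $f$ is unchanged) and $\xi_n\to0$, and invokes Theorem~\ref{thm:dp1}. The obstacles you flag are genuine, and the paper does not resolve them.

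\begin{itemize}
\item The inequality $\bar y(t)\le y(t)\le\tilde y(t)$ is a scalar comparison, so it needs the same $x(t)$ in all three predator equations. The three systems have different prey trajectories. The off-diagonal Jacobian entries $-f<0$ and $\beta y g_x$ have opposite signs wherever $g_x>0$, so the system is not monotone for any orthant order and no Kamke-type ordering is available.
\item Even if you grant the box $[\tilde x^*,\bar x^*]\times[\bar y^*,\tilde y^*]$, its size is of order $C(\xi,\alpha)$ for the \emph{fixed} $(\xi,\alpha)$. Sending $C\to0$ replaces the system by a different one: the paper's $y\le\tilde y^{\,n}$ would require $g_1(x,\xi,\alpha)\le C(\xi_n,\alpha_n)\to0$.
\item Theorem~\ref{thm:dp1} gives continuity only on a finite interval $[-a,a]$, so it carries no asymptotic information.
\end{itemize}

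More fundamentally, the statement is false as written, so no argument can close the gap.

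\emph{First counterexample.} Take the Type II AF model of Table~\ref{tab:sum_def}(ii), $f=\frac{x}{1+\alpha\xi+x}$, $g_1=\frac{\xi}{1+\alpha\xi+x}$, with $\alpha=\xi=\gamma=\beta=1$ and $\delta=3/4$. Then $H(x)=(1-x)(2+x)$ is strictly decreasing on $(0,1)$. Conditions \ref{C1}--\ref{C10} hold with $C=1/2$, and $\tilde\delta=1/4\in(0,1/3)=(0,\beta f(\gamma))$. Yet $g=\frac{x+1}{x+2}\le 2/3<\delta$ on $[0,1]$. So there is no coexistence equilibrium at all, and $y(t)\to0$ along every solution. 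This is what your Step 1 observation is detecting.

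\emph{Second counterexample.} Adding $\delta<\beta f(\gamma)$ does not rescue the statement. Take $f=\frac{x}{1+x}$, $\gamma=\beta=1$, $g_1=\xi e^{-100(x-0.3)^2}$, $\xi=0.2$, $\delta=0.4$. All hypotheses hold, with $C=\xi$, $H=1-x^2$ and $\tilde\delta=0.2$. But $g(0.3)>0.4>g(0.4)$ and $g(1)>0.4$, giving three coexistence equilibria. The middle one is a saddle by Theorem~\ref{stability_coexistence_equilibrium}(iii).

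Both examples even satisfy the paper's tacit extra assumption that $f$ depends on $(\xi,\alpha)$ only through $\alpha\xi$.

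The repair is your fallback, but the missing ingredient is monotonicity of $g$, not a link between $g_1$ and $H$. Take
\[
V=\int_{x^*}^{x}\frac{\beta g(s,\xi,\alpha)-\delta}{f(s,\xi,\alpha)}\,ds+\int_{y^*}^{y}\frac{\eta-y^*}{\eta}\,d\eta .
\]
Along \eqref{Eqn:1g} one gets exactly $\dot V=\bigl(\beta g(x,\xi,\alpha)-\delta\bigr)\bigl(H(x)-y^*\bigr)$.

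Suppose $\beta g-\delta$ has the sign of $x-x^*$. For example, $g_x>0$ together with $\delta<\beta g(\gamma,\xi,\alpha)$ suffices; the latter gives existence, and $\beta g(0,\xi,\alpha)<\delta$ is automatic from \ref{C10}. Then:
\begin{itemize}
\item $\dot V\le0$, with equality only on $\{x=x^*\}$;
\item $V\to\infty$ at the boundary of the open quadrant, because $f(s)=O(s)$ near $0$;
\item LaSalle together with Theorem~\ref{bdd_gen} yields global asymptotic stability of $E^*$ directly, with no sub/super-systems and no limit in $\xi$.
\end{itemize}
The theorem needs such a hypothesis on $g$. The condition $0<\tilde\delta<\beta f(\gamma,\xi,\alpha)$ is not sufficient, and once $g$ is monotone it is not needed.
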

\begin{proof}
\label{proof_gs}
We proceed by contradiction. Without loss of generality, assume that $(x,y)$ is not globally asymptotically stable, that is, solutions with sufficiently large initial data do not converge to the $(x^{*}, y^{*})$ state. Under the conditions needed to prove Lemma \ref{lem:p1s_1} and \ref{lem:p2s_1}, where  the coexistence equilibrium 
$(\tilde{x_1},\tilde{y_1})$ for system \eqref{eq:super_system} and $(\Bar {{x_1}},\Bar {{y_1}})$ for system \eqref{eq:sub_system} are globally asymptotically stable. By direct comparison, $\Bar {{y_1}} \leq y_{1} \leq \tilde{y_{1}}$. We next choose a sequence of 
parameter values
$\{(\xi_n,\alpha_n)\}_{n=1}^{\infty}$ such that, for each $n$,
 we have the global stability of $(\tilde{x_1},\tilde{y_1})$ and $(\Bar{x_1},\Bar{y_1})$, and furthermore,

\begin{equation*}
    \lim_{n \rightarrow \infty} \xi_{n} \alpha_{n} \rightarrow \alpha \xi, \ \lim_{n \rightarrow \infty} \xi_{n}  \rightarrow 0
\end{equation*}

Since the functional response $f(x,\xi,\alpha)$ depends on the parameters
$\xi$ and $\alpha$ only through the product $\alpha\xi$, it follows that 
$f(x,\xi_n,\alpha_n)=f(x,\xi,\alpha), \forall n$ and since $\xi_n\to0 $, from assumption \ref{C7}-\ref{C8} we have, $g_1(x,\xi_n,\alpha_n)\to0,
\quad C(\xi_n,\alpha_n)\to0$.
Now, since the dependence of $(\tilde{x_1},\tilde{y_1})$ and $(\Bar{x_1},\Bar{y_1})$, on the parameters $\xi, \alpha$, is smooth, via standard $C^{1}$ theory of dynamical systems, with respect to parameters via Theorem \ref{thm:dp1}, we can consider a sequence of solutions, $(\tilde{x^{n}_1},\tilde{y^{n}_1})$ and $(\Bar{x^{n}_1},\Bar{y^{n}_1})$, that have the parameters $\{\xi, \alpha \}$ replaced by $\{\xi_{n}, \alpha_{n} \}$. Via comparison we have,

\begin{equation}
   \Bar {{y_1}} \leq \Bar {{y^{n}_1}} \leq y_{1} \leq \tilde{y^{n}_1} \leq \tilde{y_1}
\end{equation}

Clearly, by the construction we have that in the limit,

\begin{equation}
   \Bar {{y_1}} = \lim_{n \rightarrow \infty} \Bar {{y^{n}_1}} = \lim_{n \rightarrow \infty} \tilde{y^{n}_1} 
\end{equation}

Thus we take limits to yield,

\begin{equation}
   \Bar {{y_1}} = \lim_{n \rightarrow \infty} \Bar {{y^{n}_1}} \leq y_{1} \leq \lim_{n \rightarrow \infty} \tilde{y^{n}_1} = \lim_{n \rightarrow \infty} \Bar {{y^{n}_1}} = \Bar {{y_1}}
\end{equation}

It also follows that,

\begin{equation}
   \Bar {{x_1}} \geq \Bar {{x^{n}_1}} \geq x_{1} \geq \tilde{x^{n}_1} \geq \tilde{x_1}
\end{equation}

Thus, taking limits via the same argument as earlier,

\begin{equation}
   \Bar {{x_1}} = \lim_{n \rightarrow \infty} \Bar {{x^{n}_1}} \geq x_{1} \geq \lim_{n \rightarrow \infty} \tilde{x^{n}_1} = \lim_{n \rightarrow \infty} \Bar {{x^{n}_1}} = \Bar {{x_1}}
\end{equation}

Thus $(x_{1},y_{1})$ will be driven to the globally stable $(\Bar {{x^{*}_1}}, \Bar {{y^{*}_1}})$ state, which can be made arbitrarily close to the $(x^{*}_{1}, y^{*}_{1})$ state - this yields a contradiction, and the theorem is proved.

\end{proof}
\section{Bifurcations for type IV Functional Response}
\label{bifurcation_Section}
For the model \eqref{Eqn:1g}, we study BT bifurcation with type IV Functional response, then we rewrite system \eqref{Eqn:1g} as
\begin{equation}
\label{Eqn:model_type4}
\begin{aligned}
\frac{dx}{dt} &= f(x,\xi)\left(h(x,\xi)-y\right), \\    
\frac{dy}{dt} &=  y\left(\beta g(x,\xi)  - \delta\right).
\end{aligned}
\end{equation}
with
\begin{equation}
\label{Eqn:2BT}
\begin{aligned}
f(x,\xi)&=\frac{x}{(1+\alpha\xi)(bx^2+1)+x}, \quad    
g(x,\xi)= \frac{x+\xi(bx^2+1)}{(1+\alpha\xi)(bx^2+1)+x},\\
h(x,\xi)&=\left(1-\frac{x}{\gamma}\right)\left((1+\alpha\xi)(bx^2+1)+x\right).
\end{aligned}
\end{equation}
Or equivalently,
\begin{equation}
\label{Eqn:type4_simp}
\begin{aligned}
\frac{dx}{dt} &= f(x,\xi)\left(h(x,\xi)-y\right), \\    
\frac{dy}{dt} &=  \beta y\left( g(x,\xi)  - d\right).
\end{aligned}
\end{equation}
where $d=\frac{\delta}{\beta}$

\subsection{Linear Stability Analysis}
\subsubsection{Equilibrium Points}
The system admits two boundary equilibria: $E_0=(0,0)$, extinction of both species, and $E_\gamma=(\gamma,0)$, extinction of predators.

For the interior equilibrium points, from the prey nullcline, we have $y^*=h(x^*,\xi)$ and from the predator nullcline,
\begin{equation}
    g(x,\xi)=\frac{x+\xi(bx^2+1)}{(1+\alpha\xi)(bx^2+1)+x}=d
\end{equation}
\begin{equation}
     b( \xi -d (1+\alpha \xi ))x^2 +(1-d) x -d (1+\alpha \xi ) + \xi=0
\end{equation}
Which yields two roots,
\begin{equation}
    x_1^*=\frac{(1-d)+\sqrt{(d-1)^2-4 b (d (1+\alpha \xi )-\xi )^2}}{2 b (d (1+\alpha \xi )-\xi )}, \quad
    x_2^*=\frac{(1-d)-\sqrt{(d-1)^2-4 b (d (1+\alpha \xi )-\xi )^2}}{2 b (d (1+\alpha \xi )-\xi )},
\end{equation}
this gives us two possible equilibria: $E_{x_1^*}=(x_1^*,h(x_1^*,\xi))$ and $E_{x_2^*}=(x_2^*,h(x_2^*,\xi))$. The two equilibria coalesce if the discriminant is zero. That is,
\begin{equation}\label{discriminant}
    \Delta=d^2 \left(1-4 b (1+\alpha \xi )^2\right)+d (8 b \xi  (1+\alpha \xi )-2)-4 b \xi ^2+1=0
\end{equation}
Solving this for $d$ gives,
\begin{equation}\label{Eqn:d_star}
    d=\frac{1+2\xi \sqrt{b}}{1+2 \sqrt{b}(1+\alpha \xi )}:=d^*
\end{equation}
where $\left(1+2 \sqrt{b} (1+\alpha  \xi)\right)\neq0$ or $\sqrt{b}\neq-\frac{1}{2(1+\alpha \xi )}$.\\
$x_1^*$ and $x_2^*$ exist when $d\in(0,d^*)$ and they coalesce when $d=d^*$ at $x=\frac{1}{\sqrt{b}}$.

\subsubsection{Linear Analysis}
The Jacobian matrix for \eqref{Eqn:type4_simp} at any equilibrium point $(x,y)$ is given by:
\begin{equation} \label{vm}
J(x,y)=
\begin{bmatrix}
f(x,\xi)h_x(x,\xi) - f_x(x,\xi)(h(x,\xi)-y) & -f(x,\xi) \\
\beta y g_x(x,\xi) & \beta \left(g(x,\xi) - d\right)
\end{bmatrix},
\end{equation}
In particular, at the interior equilibrium $E^* = (x^*, y^*)$ with $y^* = h(x^*,\xi)$, we have
\begin{equation} \label{vm_in}
J(x^*,y^*)=
\begin{bmatrix}
f(x^*,\xi)h_x(x^*,\xi) & -f(x^*,\xi) \\
\beta h(x^*,\xi) g_x(x^*,\xi) & 0
\end{bmatrix},
\end{equation}
At this interior equilibrium,
\begin{itemize}
    \item $\operatorname{tr}\left(J(x^*, h(x^*,\xi))\right)=$ $f(x^*,\xi)h_x(x^*,\xi)$,
    \vskip 3mm
    \item $\det\left(J(x^*, h(x^*,\xi))\right)=\beta h(x^*,\xi)f(x^*,\xi) g_x(x^*,\xi)$.
\end{itemize}

\subsection{Bogdanov-Takens Bifurcation}
For the system \eqref{Eqn:type4_simp} to have nilpotent singularity, it should have double-zero eigenvalues. From the linear analysis above, it can be noted that this system is nilpotent if $h_x(x_1^*,\xi)=0$ and $g_x(x_1^*,\xi)=0$.

From $g_x(x_1^*,\xi)=0$, we get
\begin{equation}
\begin{aligned}
x_1^*&=x_2^*=\frac{1}{\sqrt{b}}:=x^*,\\
     E_{x_1^*}&=E_{x_2^*}=\left(\frac{1}{\sqrt{b}},h\left(\frac{1}{\sqrt{b}},\xi\right)\right):=E^{**}
\end{aligned}
\end{equation}
Also, from \eqref{Eqn:d_star} when $E_{x_1^*}=E_{x_2^*}$, we have
\begin{equation}
    d=\frac{1+2\xi \sqrt{b}}{1+2 \sqrt{b}(1+\alpha \xi )}:=d^*
\end{equation}

From $h_x(x_1^*,\xi)=0$, 
\begin{equation}
    h_x\left(\frac{1}{\sqrt{b}},\xi\right)=\frac{\left(\gamma\sqrt{b}  -2\right) \left(1+2 \sqrt{b} (1+\alpha  \xi)\right)}{\gamma\sqrt{b}}=0.
\end{equation}
Since, one cannot have $\left(1+2 \sqrt{b} (1+\alpha  \xi)\right)=0$, solving this equation for $\gamma$ gives us the critical value
\begin{equation}
    \gamma=\frac{2}{\sqrt{b}}:=\gamma^*
\end{equation}
\begin{thm}
    When $(d,\gamma)=(d^*,\gamma^*)$, $E^{**}=(x^*, h(x^*,\xi))$ is a cusp singularity of codimension-$2$ for any \( x^* > 0 \), \( \alpha > 0 \), \( \xi > 0 \). Moreover, if
    \begin{equation}
        (d,\gamma,b)=\left(\frac{1}{3}\left(1+\frac{2\xi}{(1+\alpha\xi)}\right),2(1+\alpha\xi),\frac{1}{(1+\alpha\xi)^2}\right):=(d^*,\gamma^*,b^*),
    \end{equation}
    then $E^{**}=(x^*, h(x^*,\xi))=((1+\alpha\xi), h((1+\alpha\xi),\xi))$ is a cusp singularity of codimension-$3$.
    \label{thm:BT2} 
\end{thm}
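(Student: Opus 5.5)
The approach is the standard normal-form reduction for nilpotent singularities. First, translate $E^{**}$ to the origin via $X = x - x^*$, $Y = y - h(x^*,\xi)$ with $x^*=1/\sqrt{b}$. Then Taylor-expand the right-hand sides of \eqref{Eqn:type4_simp} up to third order (fourth where needed), with $(d,\gamma)=(d^*,\gamma^*)$ substituted. Because $h_x(x^*,\xi)=0$ and $g_x(x^*,\xi)=0$, the Jacobian \eqref{vm_in} reduces to
\[
\begin{bmatrix} 0 & -f(x^*,\xi)\\ 0 & 0\end{bmatrix},
\]
which is nonzero and nilpotent since $f(x^*,\xi)>0$. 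A linear change $u=X$, $v=-f(x^*,\xi)Y$ puts the linear part into Jordan form $\dot u = v$, $\dot v = 0$. It is useful to exploit the factored structure $\dot x = f\,(h-y)$, $\dot y=\beta y(g-d)$. Near the equilibrium, $g-d = \tfrac12 g_{xx}X^2+\tfrac16 g_{xxx}X^3+\dots$ and $h - y = -Y + \tfrac12 h_{xx}X^2 + \tfrac16 h_{xxx}X^3+\dots$. These expressions make the relevant coefficients explicit rational functions of $b,\alpha,\xi$.

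Next, I would apply the near-identity transformations of Kuznetsov/Perko (or Dumortier--Roussarie--Sotomayor), removing nonresonant terms order by order to reach
\[
\dot u = v,\qquad \dot v = a_2 u^2 + b_2 uv + O(|u,v|^3).
\]
In the standard formulas, $a_2$ and $b_2$ come from the quadratic coefficients: $a_2$ is essentially proportional to $\beta h(x^*,\xi) f(x^*,\xi) g_{xx}(x^*,\xi)$, and $b_2$ combines the $X^2$ coefficient of $\dot x$ (via $f h_{xx}$) with the $XY$ terms. The codimension-$2$ claim requires showing $a_2\neq0$ and $b_2\neq0$ for every admissible $b,\alpha,\xi>0$. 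For $a_2$, I would compute $g_{xx}$ at $x^*=1/\sqrt b$ and $d=d^*$. Since $x^*$ is a double root of $g=d$, we have $g_{xx}(x^*)\neq 0$ exactly when the quadratic coefficient $b(\xi-d^*(1+\alpha\xi))$ is nonzero, and this can be checked from \eqref{Eqn:d_star}. For $b_2$, I expect a factor like $h_{xx}(x^*)$, which at $\gamma=\gamma^*$ reduces to a sign-definite expression times a parameter combination. Verifying that it does not vanish is where the second statement enters: the codimension-$3$ parameter set is precisely where $b_2=0$. I would therefore locate the parameter value making $b_2$ vanish and expect it to be $b=(1+\alpha\xi)^{-2}$, which forces $x^*=1+\alpha\xi$, $\gamma^*=2(1+\alpha\xi)$, and $d^*=\tfrac13(1+2\xi/(1+\alpha\xi))$, matching the statement. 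This means that for the codimension-$2$ claim the hypothesis implicitly excludes $b=b^*$, and I would state this explicitly.

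For the codimension-$3$ part, with $b_2=0$ I would continue the reduction to third order and obtain a normal form
\[
\dot u=v,\qquad \dot v = u^2 + e_3 u^3 v + O(|u,v|^4)
\]
after eliminating the $u^2v$ term and rescaling, or I would compute the invariant $\mu$ in the form $\dot v = a_2u^2 + b_3 u^2 v+\dots$. The cusp has codimension $3$ iff the relevant cubic invariant (the coefficient of $u^2 v$ after the standard elimination, combined with $a_2$) is nonzero at $(d^*,\gamma^*,b^*)$. This is the main obstacle. The cubic-order computation requires tracking how the quadratic near-identity changes feed into cubic terms. I would do it symbolically, substituting $b=(1+\alpha\xi)^{-2}$ early to simplify $h$ and $g$ into rational functions of $x/(1+\alpha\xi)$, then show that the resulting coefficient is a nonzero multiple of $\beta$ times a positive rational expression in $\alpha,\xi$. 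If the direct approach becomes unwieldy, I would fall back on the $\dot u = v + \dots$ elimination scheme of Zhu--Campbell--Wolkowicz style lemmas, expressing the invariant directly from the original Taylor coefficients.
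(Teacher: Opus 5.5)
Your codimension-$2$ reduction is sound and matches the paper's. After translating $E^{**}$ and putting the linear part in Jordan form, the second-order normal-form coefficients are $a_2\propto\beta h(x^*,\xi)g_{xx}(x^*,\xi)$ and $b_2=f(x^*,\xi)h_{xx}(x^*,\xi)$; the $\dot y$ equation has no $XY$ term. Your double-root observation is also right. It gives $\xi-d^*(1+\alpha\xi)=(\xi-1-\alpha\xi)/(1+2(1+\alpha\xi)\sqrt b)$, so $a_2\neq0$ additionally requires $1+(\alpha-1)\xi\neq0$; for $\alpha<1$, $\xi=1/(1-\alpha)$ one gets $g\equiv1$.

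\textbf{The step that fails} is where you expect $b_2$ to vanish at $b=(1+\alpha\xi)^{-2}$. With $A=1+\alpha\xi$, $x^*=1/\sqrt b$ and $\gamma^*=2/\sqrt b$,
\[
h_{xx}(x^*,\xi)=-\frac{2\,(2Abx^*+1)}{\gamma^*}+\Bigl(1-\frac{x^*}{\gamma^*}\Bigr)2Ab=-Ab-\sqrt b=-\sqrt b\,\bigl(1+A\sqrt b\bigr)<0
\]
for every $b>0$. The value $b^*=A^{-2}$ is a spurious root produced by squaring $Ab=-\sqrt b$. It belongs to the branch $\sqrt b=-1/A$, i.e.\ the mirror point $x^*=-A$, $\gamma^*=-2A$, where $h_{xx}$ does vanish. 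Substituting $b^*$ back at the actual point gives $h_{xx}=-2/A\neq0$. For instance, $\alpha=\xi=1$, $b=1/4$, $\gamma=4$ gives $h(x)=-x^3/8+x^2/4+x/2+2$, with $h'(2)=0$ and $h''(2)=-1$. The paper's proof makes exactly this slip when it ``solves'' $h_{xx}=0$ for $b$.

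Consequently $b_2\neq0$ at $(d^*,\gamma^*,b^*)$ as well, so $E^{**}$ is still a codimension-$2$ cusp there. Your remark that the codimension-$2$ claim must exclude $b=b^*$ is unnecessary. The cubic-order elimination you single out as the main obstacle is never reached. Neither your plan nor the paper's fourth-order reduction can yield codimension $3$, because the premise $\zeta_2=0$ never holds for $b>0$.

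A quick independent check: divide the vector field by $f>0$ (a positive time rescaling) and set $w=h(x)-y$. This gives the Li\'enard system $\dot x=w$, $\dot w=-h\phi+(h_x+\phi)w$ with $\phi=\beta(g-d)/f$. Since $\phi(x^*)=\phi_x(x^*)=h_x(x^*)=0$, this is already in Bogdanov--Takens form in $X=x-x^*$. Its $X^2$-coefficient is $-\beta h g_{xx}/(2f)$ and its $Xw$-coefficient is exactly $h_{xx}(x^*,\xi)<0$. What your argument can actually establish is that for all $b,\alpha,\xi>0$ with $1+(\alpha-1)\xi\neq0$, $E^{**}$ is a cusp of codimension exactly $2$. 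The codimension-$3$ assertion of the statement is false as written.
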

\begin{proof}
We begin by shifting coordinates via the affine transformation $x_1=x-x^*$ and $y_1=y-h(x^*,\xi)$, which brings the equilibrium $E^{**}=(x^*,h(x^*,\xi))$ to the origin. Expanding the system in a Taylor series around $E$, we obtain the following reduced system:
\begin{eqnarray}\label{eq:sys_1}
\left\{\begin{aligned}
\dot{x_1}= &\ y_1 +\frac{f(x^*,\xi)h_{xx}(x^*,\xi)}{2}x_1^2 + O(|(x_1,y_1)^3|),\\
\dot{y_1}=&\ \frac{\beta h(x^*,\xi) g_{xx}(x^*,\xi)}{2}x_1^2  + O(|(x_1,y_1)^3|)\\
\end{aligned}\right.
\end{eqnarray}
Under the near identity transformation $x_2=x_1$, and $y_2=\dot{x_1}$, system \eqref{eq:sys_1} reduces to
\begin{eqnarray}\label{eq:sys_2}
\left\{\begin{aligned}
\dot{x_2}= &\ y_2,\\
\dot{y_2}=&\ \zeta_1 x_2^2 + \zeta_2 x_2y_2  + O(|(x_2,y_2)^3|)\\
\end{aligned}\right.
\end{eqnarray}
Here, 
\begin{equation}
    \zeta_1=\frac{\beta h(x^*,\xi) g_{xx}(x^*,\xi)}{2}<0, \qquad \zeta_2=f(x^*,\xi) h_{xx}(x^*,\xi)
\end{equation}
Clearly, $\zeta_2=0$ if and only if $h_{xx}(x^*,\xi)=0$. We solve 
\begin{equation}
    h_{xx}(x^*,\xi)=h_{xx}\left(\frac{1}{\sqrt{b}},\xi\right)=-b (1+\alpha  \xi)-\sqrt{b}=0
\end{equation}
for $b$, gives
\begin{equation}
    b=\frac{1}{(1+\alpha  \xi)^2}:=b^*.
\end{equation}
If $b=b^*$, then we expand system \eqref{eq:sys_1} up to fourth order and reduce it to
\begin{eqnarray}\label{eq:sys_3}
\left\{\begin{aligned}
\dot{x}= &\ y,\\
\dot{y}=&\ x^2 + \frac{2 g^{(3)}(x^*,\xi)}{3\beta h(x^*,\xi)g_{xx}(x^*,\xi)^2}x^3+\frac{2(f(x^*,\xi)^2 h^{(3)}(x^*,\xi)-\beta g_{xx}(x^*,\xi))}{\beta^2 f(x^*,\xi) h(x^*,\xi)^2g_{xx}(x^*,\xi)^2}x^2y\\ 
&+ 4\frac{f_x(x^*,\xi)(4f(x^*,\xi)^2 h^{(3)}(x^*,\xi)+3\beta g_{xx}(x^*,\xi))-\beta f(x^*,\xi)g^{(3)}(x^*,\xi)}{3\beta^3 f(x^*,\xi)^2 h(x^*,\xi)^3g_{xx}(x^*,\xi)^3}x^3y\\
&+\frac{g^{(4)}(x^*,\xi)}{3\beta^2 h(x^*,\xi)^2g_{xx}(x^*,\xi)^3}x^4 +O(|(x,y)^5|)
\end{aligned}\right.
\end{eqnarray}
using the following near-identity transformation
\begin{equation}\label{eq:transf_2}
    \begin{aligned}
    x_1&=\frac{2x}{\beta h(x^*,\xi) h_{xx}(x^*,\xi)},\\
    y_1&=\frac{2}{\beta h(x^*,\xi) h_{xx}(x^*,\xi)}\left(y-\frac{2f(x^*,\xi) h^{(3)}(x^*,\xi)}{3\beta^2 h(x^*,\xi)^2g_{xx}(x^*,\xi)^2}x^3-\frac{4f_x(x^*,\xi) h^{(3)}(x^*,\xi)}{3\beta^3 h(x^*,\xi)^3g_{xx}(x^*,\xi)^3}x^4+ O(|(x,y)^5|)\right).
\end{aligned}
\end{equation}
Then by Proposition $5.3$ of \cite{lamontagne2008bifurcation}, system \eqref{eq:sys_3} is $C^\infty$ equivalent to the following
\begin{eqnarray}\label{eq:sys_4}
\left\{\begin{aligned}
\dot{x}= &\ y ,\\
\dot{y_1}=&\ x^2 + \zeta_3 x^3y + O(|(x,y)^5|)\\
\end{aligned}\right.
\end{eqnarray}
where
\begin{equation}
    \zeta_3=4\left(\frac{3(1+\alpha\xi)}{\beta(1+(\alpha-1)\xi)}\right)^3>0.
\end{equation}
Hence, it can be concluded that $E^{**}$ is a cusp singularity of codimension-$3$.
\end{proof}

\section{CRN - Deficiency}
\label{sec:crn}

CRNs are defined by a tuple of form $\mathcal{N}=(S, R)$ where $S$ is a finite set of species and $R$ is a finite set of reactions that operate on the species \cite{Aris1965_DCRNT, Feinberg1972_complex_balalnce_DCRNT, HornJackson1972_complex_balancing, Huang2019}. The semantics of a CRN are derived from nature based on reaction rates and reactant concentrations, and are commonly called the mass-action model \cite{Shapiro_MassActionGibbsFunction_DCRNT_EquilibriumComputation, Krambeck1970_DCRNT}. In this paper, we use a version of the mass-action model that characterizes the semantics (changes in species concentrations) as a system of polynomial autonomous differential equations \cite{Bournez2021_survey_analog, EpsteinPojman_CRN_book, computability_dcrn}. For example, consider a general $n-$species reaction network as follows:
\begin{equation}
    R_i:\qquad a_{i1}X_1+\dots+a_{in}X_n\xrightarrow{k_i}b_{i1}X_1+\dots+b_{in}X_n,\ \forall X_1,\dots,X_n\in S,\ \forall R_i\in R
\end{equation}

In the reaction network provided above, $k_i\in \mathbb{R}_{>0}$ is the rate constant of the $i^{th}$ reaction $R_i$. According to the mass-action kinetics, the rate equation induced by such a CRN is
\begin{equation}
    \dot{x_i}=\frac{dx_i}{dt}=\sum_{j=1}^{|R|}k_j(b_{ji}-a_{ji})\prod_{k=1}^{|S|}x_k^{a_{jk}},
\end{equation}
where $x_i$ is a function of time that represents the concentration of species $X_i\in S$ at time $t$.

Chemical Reaction Network (CRN) Theory introduces the idea of deficiency of reaction networks \cite{Feinberg1972_complex_balalnce_DCRNT, HornJackson1972_complex_balancing, Horn1972_deficiency_zero, Feinberg1995_deficiency_one, FEINBERG198059_deficiency_one, FEINBERG19872229_deficiency_one, feinberg2019foundations}. Deficiency $\delta$ of CRN is defined by the following equation
\begin{equation}
    \delta=n-l-s
    \label{eq:deficiency}
\end{equation}
Here, $n$ is the number of complexes taking part in the reaction network, $l$ is the number of linkage classes, and $s$ is the dimension of the stoichiometric subspace (SS). The SS is defined as
\begin{equation*}
    SS=span\{y'-y|y\rightarrow y'\in R,\ and\ y,y'\in\mathbb{R}_{\geq 0}^{|S|}\},
\end{equation*}
where $y,y'$ are complexes, $R$ is the set of reactions, and $|S|$ is the cardinality of the set of species $S$. In other words, deficiency intuitively measures how independent the dynamics of each reaction are from those of the others, given shared complexes and species.

Consider the following example given by Feinberg \cite{feinberg2019foundations} to explain each of the above-mentioned terminologies.
\begin{equation}
    \begin{split}
        R1&:\qquad A\leftrightharpoons 2B\qquad\quad\qquad\ \ R2:\qquad A+C\leftrightharpoons D\\
        R3&:\qquad D\rightarrow B+E\qquad\qquad R4:\qquad B+E\rightarrow A+C
    \end{split}
    \label{eq:CRN_example}
\end{equation}

There are a total of 5 species in the reaction network, which are $S=\{A, B, C, D, E\}$ and a set of 5 complexes, $\mathcal{C}=\{A, 2B, A+C, D, B+E\}$. Each complex $y\in\mathcal{C}$ is represented by a vector in $\mathbb{R}_{\geq 0}^{|S|}$ where $|S|$ is the cardinality of $S$, where each row represents the stoichiometric coefficient of an individual species appearing in the complex. For eg, considering the reaction $R1$ with complexes $y,y'$ of form $y\rightarrow y'$, the vectors are represented as $y=[1,0,0,0,0]^T,y'=[0,2,0,0,0]^T$ where the first, second, third, fourth, and fifth entities of the vectors represent the stoichiometric coefficients of $A, B, C, D,$ and $E$, respectively. The resultant vectors $y'-y$ for each reaction are called the stoichiometric vectors (SV) belonging to the SS. The term $s$ in equation \eqref{eq:deficiency} represents the dimension of this SS, i.e., the number of basis vectors that span the SS.

Now consider each complex as a node in a graph. Two nodes representing distinct complexes are connected by an undirected edge if there is a direct reaction between them. If we construct such a graph for the entire reaction network, we obtain a graph with connected subgraphs. We call each such connected subgraph a linkage class. The term $l$ in equation \eqref{eq:deficiency} represents the number of such linkage classes that are formed by the reaction network.

We can now analyze the deficiency of the CRN \eqref{eq:CRN_example}. Note that $R2, R3, R4$ form one linkage class while $R1$ alone forms one linkage class. As a result, we get $l=2$. The number of complexes $n=|\mathcal{C}|=5$. The stoichiometric vectors generated by the above CRN are
\begin{equation}
    R1:\begin{bmatrix}
        -1\\2\\0\\0\\0
    \end{bmatrix},\begin{bmatrix}
        1\\-2\\0\\0\\0
    \end{bmatrix}\qquad R2:\begin{bmatrix}
        -1\\0\\-1\\1\\0
    \end{bmatrix},\begin{bmatrix}
        1\\0\\1\\-1\\0
    \end{bmatrix}\qquad R3:\begin{bmatrix}
        0\\1\\0\\-1\\1
    \end{bmatrix}\qquad R4:\begin{bmatrix}
        1\\-1\\1\\0\\-1
    \end{bmatrix}
\end{equation}
From this, we can find that there exist the following 3 basis vectors implying $s=3$
\begin{equation}
    SS_{basis}=\left\{ \begin{bmatrix}
        -1\\2\\0\\0\\0
    \end{bmatrix},\quad \begin{bmatrix}
        1\\0\\1\\-1\\0
    \end{bmatrix},\quad \begin{bmatrix}
        0\\1\\0\\-1\\1
    \end{bmatrix} \right\}
\end{equation}
As a result, we get $\delta=5-2-3=0$.

In this regard, we analyze the deficiencies of various CRN systems representing ODE models, both with and without additional food (AF). The results of this analysis, along with the codims, are summarized in Table \ref{tab:sum_def}.

\begin{table}
\caption{Summary of Deficiency and BT-codimension with varying functional responses}
\label{tab:sum_def}
\centering
{%
\begin{tabular}{|c|C{2.55cm}|C{0.5cm}|C{1.5cm}|C{5cm}|C{0.5cm}|C{1.5cm}|}
\hline
& \multicolumn{3}{c|}{Without AF} &  \multicolumn{3}{c|}{With AF} \\
\hline
& Functional form & \centering $\delta$ & BT-codim & Functional form & \centering $\delta$ & BT-codim \\
\hline\hline
\multirow{2}{*}{\centering (i)}& \multirow{2}{*}{\centering $f(x) = x$}
& 2 & NA*
& $f(x,\xi,\alpha) = x$
& 3 & NA* \\ & 
& 
& \cite{zegeling2020singular}
& \vspace{1mm} $g(x,\xi,\alpha) = x+\xi$ & & \\
\hline

\multirow{2}{*}{\centering (ii)}& \multirow{2}{*}{\centering $f(x) = \frac{x}{1+ x}$}
& 5 & NA*
& $f(x,\xi,\alpha) = \frac{x}{1+\alpha \xi + x}$
& 6 & NA* \\ & 
& 
& \cite{cheng1981uniqueness, kuang1988uniqueness}
& \vspace{1mm} $g(x,\xi,\alpha) = \frac{\beta (x+\xi)}{1+\alpha \xi + x}$ & & \cite{SP07} \\
\hline

\multirow{2}{*}{\centering (iii)}& \multirow{2}{*}{\centering $f(x) = \frac{x^2}{ax^2+bx+ 1}$} & 5 & 3
& $f(x,\xi,\alpha) = \frac{x^2}{1+\alpha\xi^2+x^2}$ & 6 & NA* \\ & 
 & & \cite{lamontagne2008bifurcation} & \vspace{1mm} $g(x,\xi,\alpha) = \frac{\beta (x^{2}+\xi^{2})}{1+\alpha \xi^{2} + x^{2}}$ & & \cite{SPV18, ananth2021influence} \\
\hline

\multirow{3}{*}{\centering (iv)}& \multirow{3}{2.55cm}{\centering$f(x)$ as in (ii) $+$
\small Predator competition}
& 6 & 2 & \multirow{3}{5cm}{\centering $f(x,\xi,\alpha)$ and $g(x,\xi,\alpha)$ as in (ii) $+$ \small{Predator competition}} & 7 & 4 \\ & 
& & \vspace{1mm} \cite{lu2021global} & 
& & \vspace{1mm} \cite{kgoyal}\\  & & & & & & \\
\hline

\multirow{3}{*}{\centering (v)}& \multirow{3}{*}{\centering $f(x) = \frac{x}{ax^2+bx+ 1}$} & 8 & 3
&  $f(x,\xi,\alpha) = \frac{x}{(1+\alpha \xi)( \omega x^{2}+1) + x},$ & 10 & 3\\ & 
&  & \cite{zhu2003bifurcation, xiao2001global, xiao2006multiple, rothe1992multiple} & \vspace{1mm} $g(x,\xi,\alpha) = \frac{\beta (x+\xi(\omega x^{2}+1))}{(1+\alpha \xi)(\omega x^{2} + 1) + x}$ & & \\
\hline

\end{tabular}%
}

\vspace{0.1cm}
 \raggedright
 \textbf{Note: }{NA* describes that BT-bifurcation does not exist in the cited literature; only codimension-1 bifurcations (saddle-node, transcritical, Hopf bifurcations) or existence of limit cycles were investigated.}
\end{table}

\subsection{Deficiency calculation for Predator Competition--Type II}

\subsubsection{Without Additional Food}
\begin{equation} \label{eq:model_drift_cann_noajf}
\begin{aligned}
 \Dot{x} = x \left (1-\frac{x}{\gamma}\right )-\frac{x y}{1+x}\qquad\qquad
 \Dot{y} = \epsilon \left ( \frac{x}{1+x} \right)\ y  - k y - cy^2 
 \end{aligned}
 \end{equation}

 To make $\Dot{x},\Dot{y}$ represent the rate equation of a CRN, we considering $s=\frac{1}{1+x}$. Then equation \eqref{eq:model_drift_cann_noajf} becomes
 \begin{equation} \label{eq:model_drift_cann_noajf_substitute}
\begin{aligned}
 \Dot{x} = x-\frac{1}{\gamma}x^2-xys\qquad\qquad
 \Dot{y} = \epsilon xys  - k y - cy^2
 \end{aligned}
 \end{equation}
 
 Since $S$ is considered as a new species with species concentration $s$ that computes $\frac{1}{1+x}$, we now calculate $\frac{ds}{dt}$
 \begin{equation}
     \label{eq:s_derivative}
     \begin{split}
        \Dot{s}&=\frac{ds}{dt}=\frac{ds}{dx}\frac{dx}{dt}\qquad \bigg(where,\ \frac{ds}{dx}=-\bigg(\frac{1}{1+x}\bigg)^2=-s^2\bigg)\\
        \Rightarrow \Dot{s}&=-s^2x+\frac{1}{\gamma}s^2x^2+xys^2\qquad \bigg(substituting\ value\ of\ \frac{ds}{dx}\ and\ \frac{dx}{dt}\bigg)
     \end{split}
 \end{equation}

 The ODE system thus generated represents the following reaction network
 \begin{equation}
     \label{eq:crn_pctype2}
     \begin{split}
         2S+X\xrightarrow{1/2}X\qquad X\xrightarrow{1} 2X\qquad 2X\xrightarrow{1/2\gamma}\phi\qquad 2X+2S\xrightarrow{1/\gamma}2X+3S\qquad Y\xrightarrow{k}\phi\\
         X+Y+3S\xrightarrow{1}X+Y+4S\qquad X+Y+S\xrightarrow{1}Y+S\qquad X+Y+S\xrightarrow{\epsilon}X+S\qquad 2Y\xrightarrow{c/2}\phi
     \end{split}
 \end{equation}

 On closer analysis, we observe that $ n=13$, $ l=4$, and $s=3$. As a result, we get $\delta=6$

 \subsubsection{With Additional Food}
\begin{equation} \label{eq:model_drift_cannj}
\begin{aligned}
 \Dot{x} = x \left (1-\frac{x}{\gamma}\right )-\frac{x y}{1+x+\alpha \xi} \qquad\qquad
 \Dot{y} = \epsilon \left ( \frac{x+\xi}{1+x+\alpha \xi} \right)\ y  - k y - c \xi y^2
 \end{aligned}
 \end{equation}

As in the previous case, we consider $s=\frac{1}{1+x+\alpha\xi}$, which changes equation \eqref{eq:model_drift_cannj} as follows.
 \begin{equation} \label{eq:model_drift_cann_noajf_substitute_af}
\begin{aligned}
 \Dot{x} = x-\frac{1}{\gamma}x^2-xys\qquad\qquad
 \Dot{y} = \epsilon xys + \epsilon\xi ys  - k y - c\xi y^2
 \end{aligned}
 \end{equation}

 We now calculate $\Dot{s}$ as follows

 \begin{equation}
     \label{eq:s_derivative_af}
     \begin{split}
        \Dot{s}&=\frac{ds}{dx}\frac{dx}{dt}=-s^2x+\frac{1}{\gamma}s^2x^2+xys^2 \qquad \bigg(where,\ \frac{ds}{dx}=-\bigg(\frac{1}{1+x+\alpha\xi}\bigg)^2=-s^2\bigg)
     \end{split}
 \end{equation}

 This induces the following CRN
 \begin{equation}
     \label{eq:crn_pctype2_af}
     \begin{split}
        X\xrightarrow{1} 2X\qquad 2X\xrightarrow{1/2\gamma}\phi\qquad 2X+2S\xrightarrow{1/\gamma}2X+3S\qquad Y\xrightarrow{k}\phi\\
        X+Y+S\xrightarrow{1}Y+S\qquad X+Y+S\xrightarrow{\epsilon}X+S\qquad 2Y\xrightarrow{c\xi/2}\phi\\
        2S+X\xrightarrow{1/2}X\qquad X+Y+3S\xrightarrow{1}X+Y+4S\qquad Y+S\xrightarrow{\epsilon\xi}2Y+S
     \end{split}
 \end{equation}

 Analyzing the CRN \eqref{eq:crn_pctype2_af}, we get $n=14,l=4,$ and $s=3$ resulting in $\delta=7$.




\section{Discussion \& Conclusion}
\label{disc_conclusion}
In this study, we discuss the dynamics for a generalized additional food model \eqref{Eqn:1g} in which the predator $(y)$ growth not only depends on prey $(x)$ but also depends on the amount/quantity of additional food $(\xi)$. The quality of additional food is represented by $(\alpha)$, incorporating its nutritional value and the handling time required by predators. The model considers a general functional response $f(x, \xi, \alpha)$ and a numerical response $g(x, \xi, \alpha)$, which allows the analysis to apply to a broad class of AF models. We first showed that solutions remain nonnegative and bounded for all nonnegative initial conditions, thereby ensuring the model's well-posedness and biological feasibility (Theorems \ref{pos_inv_gen} and \ref{bdd_gen}). We then analyzed the existence and stability of the equilibrium points and derived conditions for their local and global stability.
The trivial equilibrium is always unstable, whereas the predator-free equilibrium $E_{\gamma}$ is stable only when predator mortality exceeds gains from consumption, else it is a saddle. For the coexistence equilibrium, a key requirement is that
$g_x(x^*,\xi,\alpha)>0$.
Under this condition, two sets of sufficient criteria were derived to ensure local stability. In contrast, when
$g_x(x^*,\xi,\alpha)<0$,
the coexistence equilibrium becomes a saddle (Theorems \ref{stability_trivial_equilibrium} and \ref{stability_coexistence_equilibrium}). \textcolor{black}{While local stability of the coexistence equilibrium can be determined by the sign of $g_x(x^*,\xi,\alpha)$, proving global stability is much more challenging. The asymmetry introduced by additional food in the functional and numerical responses prevents the direct use of many standard Lyapunov arguments. However, we were able to establish the global stability of the coexistence equilibrium by proving a sequence of lemmas and theorems, particularly see Lemmas \ref{lem:p1s_1}, \ref{lem:p2s_1} and Theorem \ref{thm:gs1_general}.}

The shape of the numerical response plays a key role in determining the system's long-term behavior.
As discussed in Remark \ref{Remark:unique}, if $g_x(x,\xi,\alpha) > 0$ then the model admits at most one coexistence equilibrium. This condition implies that predator growth increases with prey availability and with additional food, leading to a unique coexistence state whenever it exists. In contrast, if $g(x, \xi, \alpha)$ is non-monotonic, multiple coexistence equilibria may exist, suggesting that the interaction between predator growth and prey density is no longer straightforward. In such cases,  the equilibrium approached by the system can depend on the initial predator and prey densities. The existence of multiple coexistence equilibria may indicate richer dynamics, as changes in model parameters can alter both the number and stability of equilibria. This connection is evident in the Type IV functional response, where the non-monotonic nature of the response gives rise to a Bogdanov--Takens bifurcation and marks the onset of more complex dynamics.

\textcolor{black}{On the other hand, a CRN representation was formulated for these systems, providing a non-parametric lens to analyze stability and richness of dynamics. This work offers an analysis of the deficiency of reaction networks, an established metric in Chemical Reaction Network Theory. Although it does not clearly establish any relationship between deficiency and BT-codims, the analysis provides a different perspective on the richness of the dynamics of the studied systems. This is clearly indicated by the observation that all AF systems exhibit a higher deficiency than non-AF systems. This finding aligns with the theoretical and practical observations reported in this work and in the literature. In conclusion, we raise the following questions for future work.}

\begin{question*}
    In general, what interpretation of the richness of dynamics does the deficiency of the reaction network carry in systems like the one studied in this work?
\end{question*}

\begin{question*}
    Is there a defined relation between deficiency and BT-codim?
\end{question*}

\begin{question*}
    In general, can an increase in the value of deficiency be used as a strong indicator for richer dynamics of any dynamical system under study?
\end{question*}

To these ends, based on the computations herein, we make the following conjecture. 

\begin{cnj}
    Consider an AF model of general type \eqref{Eqn:type4_simp}, for which there exist multiple equilibrium. Let the deficiency of the AF model be $\delta_{1}$. Also consider the AF model with $\xi=0$,  whose deficiency is $\delta_{2}$. Then $\delta_{1} \geq \delta_{2}$. Furthermore, if the AF model has a BT bifurcation of codimension $\beta_{1}$, and the AF model with $\xi=0$ has a BT bifurcation of codimension $\beta_{2}$, then $\beta_{1} \geq \beta_{2}$.
\end{cnj}

\printbibliography
\end{document}